\definecolor{navy}{rgb}{0, 0, 0.75}
\theoremstyle{plain}
\newtheorem{theorem}{Theorem}
\newtheorem{corollary}[theorem]{Corollary}
\numberwithin{theorem}{section}
\newcommand{\abs}[1]{\lvert#1\rvert}
\newcommand{\Abs}[1]{{\left\lvert#1\right\rvert}}
\newcommand{\N}{\mathbb{N}}
\newcommand{\cB}{\mathcal{B}}
\newcommand{\cD}{\mathcal{D}}
\newcommand{\cF}{\mathcal{F}}
\newcommand{\cP}{\mathcal{P}}
\newcommand{\cQ}{\mathcal{Q}}
\newcommand{\cX}{\mathcal{X}}
\newcommand{\cY}{\mathcal{Y}}
\DeclareMathOperator*{\E}{\mathbb{E}}
\newcommand{\hcD}{\widehat{\mathcal{D}}}
\newcommand{\tcD}{\widetilde{\mathcal{D}}}
\newcommand{\TV}{d_{\mathrm{TV}}}
\newcommand{\eps}{\varepsilon}
\newcommand\blfootnote[1]{
    \begingroup \renewcommand\thefootnote{}\footnote{#1}
    \addtocounter{footnote}{-1}
    \endgroup
}
\title{Supersimulators}
\author{
    Cynthia Dwork \\
    Harvard University \\
    \url{dwork@seas.harvard.edu}
\and
    Pranay Tankala \\
    Harvard University \\
    \url{pranay_tankala@g.harvard.edu}
}
\date{September 22, 2025}
\begin{document}

\maketitle

\begin{framed}
\noindent\textbf{Notice:} This paper has been superseded by \href{https://arxiv.org/abs/2511.03653}{\texttt{arXiv:2511.03653}}, which incorporates all results presented here and more. Please refer to that paper for the most up-to-date version.
\end{framed}

\begin{abstract}
    We prove that every randomized Boolean function admits a \emph{supersimulator}: a randomized polynomial-size circuit whose output on random inputs cannot be efficiently distinguished from reality with constant advantage, even by polynomially larger distinguishers. Our result builds on the landmark \emph{complexity-theoretic regularity lemma} of Trevisan, Tulsiani and Vadhan (2009), which, in contrast, provides a simulator that fools smaller distinguishers. We circumvent lower bounds for the simulator size by letting the distinguisher size bound vary with the target function, while remaining below an absolute upper bound independent of the target function. This dependence on the target function arises naturally from our use of an iteration technique originating in the graph regularity literature.

    The simulators provided by the regularity lemma and recent refinements thereof, known as \emph{multiaccurate} and \emph{multicalibrated} predictors, respectively, as per H\'ebert-Johnson et al. (2018), have previously been shown to have myriad applications in complexity theory, cryptography, learning theory, and beyond. We first show that a recent multicalibration-based characterization of the computational indistinguishability of product distributions actually requires only (calibrated) multiaccuracy. We then show that supersimulators yield an even tighter result in this application domain, closing a complexity gap present in prior versions of the characterization.
\end{abstract}

\blfootnote{This work was supported in part by Simons Foundation Grant 733782 and Cooperative Agreement CB20ADR0160001 with the United States Census Bureau.}

\newpage

\section{Introduction}

Szemer\'{e}di's regularity lemma \cite{szemeredi1975regular} is a cornerstone result in graph theory with far-reaching consequences, including several celebrated results in number theory. Roughly speaking, the lemma states that any large, dense graph---no matter how complex---can be split into a small number of parts between which the graph's edges are distributed \emph{pseudorandomly}.

Although Szemer\'{e}di's regularity lemma spawned many variants and analogues, these versions were still highly specialized (e.g. to cut sizes in graphs, or Fourier uniformity in finite vector spaces). The \emph{complexity-theoretic regularity lemma} \cite{trevisan2009regularity} generalized and abstracted the concept of regularity by considering \emph{indistinguishability} with respect to an arbitrary collection of \emph{distinguisher} functions defined on an arbitrary domain.

In this paper, we prove a strengthened version of the complexity-theoretic regularity lemma. To illustrate our improvement, consider a setting especially relevant to applications in complexity theory and cryptography: distinguishers computable by size-bounded Boolean circuits that receive a uniformly random input from $\{0, 1\}^n$. In this setting, the regularity lemma states that every randomized Boolean function of arbitrary complexity can be \emph{simulated} by a randomized circuit of size $O(s)$ that fools all distinguishers of size at most $s$.

\begin{theorem}[Special Case of Theorem 1.1 of \cite{trevisan2009regularity}]
\label{thm:ttv-for-circuits}
    For all target functions $g : \{0, 1\}^n \to [0, 1]$, size bounds $s \in \N$, and error tolerances $\eps > 0$, there exists a simulator $h : \{0, 1\}^n \to [0, 1]$ of size\footnote{A circuit $c$ computes a real-valued function $h : \{0, 1\}^n \to [0, 1]$ in binary if there exists $m \in \N$ such that $c$ has $n$ input bits, $m$ output bits, and $h(x) = \sum_{i=1}^m  c_i(x) / 2^{i-1}$ for all inputs $x \in \{0, 1\}^n$, where $c_i$ denotes the $i$\textsuperscript{th} output bit of $c$. By the \emph{complexity} or \emph{circuit size} of $h$, we mean the size of the smallest Boolean circuit that computes it.} at most $O(s/\eps^2)$ such that for all distinguishers $A : \{0, 1\}^{n+1} \to \{0, 1\}$ of size at most $s$,
    \[
        \Bigl\lvert{\scalebox{1.2}{$\mathbb{E}$}_{\substack{x \sim \{0, 1\}^n \\ y|x \sim \cB(g(x))}} \big[A(x, y)\bigr] - \scalebox{1.2}{$\mathbb{E}$}_{\substack{x \sim \{0, 1\}^n \\ y|x \sim \cB(h(x))}} \bigl[A(x, y)\bigr]\Bigr\rvert}\le \eps.\footnote{$\cB(p) \in \Delta(\{0, 1\})$ denotes the Bernoulli distribution with parameter $p$.}
    \]
\end{theorem}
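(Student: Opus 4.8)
The plan is to prove this by the standard energy-increment (boosting) argument, specialized to size-bounded circuits, building the simulator as a short sequence of gradient-style corrections. The first step is to reduce the distinguishing condition to a correlation condition. Given any distinguisher $A : \{0,1\}^{n+1} \to \{0,1\}$, set $a(x) := A(x,1) - A(x,0) \in \{-1,0,1\}$; hardwiring the last input bit to each of $0$ and $1$ shows that $a$ has circuit size $O(s)$ whenever $A$ does. Since $\E_{y \mid x \sim \cB(p)}[A(x,y)] = A(x,0) + p\bigl(A(x,1)-A(x,0)\bigr)$, the left-hand side of the displayed inequality is exactly $\bigl\lvert \E_{x \sim \{0,1\}^n}[(g(x)-h(x))\,a(x)] \bigr\rvert$. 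So it suffices to construct $h : \{0,1\}^n \to [0,1]$ of circuit size $O(s/\eps^2)$ with $\bigl\lvert \E_x[(g(x)-h(x))\,a(x)] \bigr\rvert \le \eps$ for every $a$ of this form arising from an $A$ of size at most $s$.

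For the construction, initialize $h_0 \equiv 1/2$ and track the potential $\Phi_t := \E_x[(g(x)-h_t(x))^2]$, noting $\Phi_t \ge 0$ and, since $g$ takes values in $[0,1]$, $\Phi_0 \le 1/4$. Given $h_t$: if it already meets the correlation bound, halt and output it. Otherwise some size-$s$ distinguisher violates it, so (possibly after negating $a$, which keeps its size $O(s)$) there is a function $a_t$ of circuit size $O(s)$ with $\E_x[(g(x)-h_t(x))\,a_t(x)] > \eps$; set $h_{t+1}(x) := \min\{1,\ \max\{0,\ h_t(x) + \eps\,a_t(x)\}\}$. Expanding the square and using $\lvert a_t\rvert \le 1$,
\[
    \E_x\bigl[(g(x)-h_t(x)-\eps a_t(x))^2\bigr] = \Phi_t - 2\eps\,\E_x[(g-h_t)a_t] + \eps^2\,\E_x[a_t^2] \;\le\; \Phi_t - 2\eps^2 + \eps^2 = \Phi_t - \eps^2 ,
\]
and since $g(x) \in [0,1]$, clipping $h_t(x)+\eps a_t(x)$ into the interval $[0,1]$ can only decrease its distance to $g(x)$, so $\Phi_{t+1} \le \Phi_t - \eps^2$. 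As $\Phi_0 \le 1/4$ and $\Phi_t \ge 0$, the loop halts after $T \le 1/(4\eps^2)$ rounds, and its output $h_T$ satisfies the theorem.

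It remains to bound the circuit size of $h_T$. It is obtained from the constant $1/2$ by $T = O(1/\eps^2)$ updates, each of which adds a fixed scalar multiple of a (new) size-$O(s)$ function and applies a constant-size clipping gadget, so $h_T$ has circuit size $O(Ts) = O(s/\eps^2)$. Finite precision is not an obstacle: after rounding $\eps$ to a nearby dyadic rational, every intermediate value appearing in the construction is an integer multiple of $\eps$ of magnitude $O(1)$, so $O(\log(1/\eps))$ output bits suffice to represent $h_T$ in binary as in the theorem's footnote.

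The delicate points I anticipate are (i) checking that the pointwise clipping onto $[0,1]$---the range of the target $g$---preserves the per-round potential drop of $\eps^2$ (a one-line fact about projecting onto a convex set, but it is precisely where $g \in [0,1]$ is used), and (ii) carrying constants honestly through the precision/rounding bookkeeping so that the final bound is genuinely $O(s/\eps^2)$; the nonconstructive choice of $a_t$, which exists exactly because $h_t$ fails the correlation condition, is harmless for an existence claim.
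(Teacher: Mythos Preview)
Your proposal is correct and follows the standard energy-increment (boosting) argument of \cite{trevisan2009regularity}; the paper does not give its own proof of this cited statement, but the proof of \cref{thm:supersimulator-basic} in \cref{sec:construction-1} is exactly this same argument (initialize at $1/2$, take gradient steps of size $\eps$, clip to $[0,1]$, track the squared-error potential), merely with the distinguisher family allowed to grow between rounds. Your reduction via $a(x)=A(x,1)-A(x,0)$ matches the paper's remark that randomized label-distinguishers $A$ are subsumed by deterministic correlation distinguishers $f_A$.
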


\cref{thm:ttv-for-circuits} ensures the existence of a simulator that is at most
a constant factor larger than the distinguishers, but makes no guarantees about the existence of \textit{smaller} simulators.  At least three different arguments in the literature, two in the original paper \cite{trevisan2009regularity} and one in a subsequent work \cite{chen2018complexity}, assert that this complexity gap is inevitable, and the simulator must always be allowed to be larger than the distinguishers that it is asked to fool.

Surprisingly, all three lower bound arguments can be circumvented by relaxing the requirement that the result hold for all size bounds $s$.  By instead allowing the choice of $s$ to depend on the target function, while remaining below an absolute upper bound independent of the target function, it is possible to design simulators that fool families of distinguishers far more powerful than themselves. In this work, we will refer to such an object as a \emph{supersimulator}. As we shall see, this relaxed quantification in which $s$ depends on $g$ will still be useful for applications.

The existence of supersimulators is implicit in the \emph{code-access outcome indistinguishability} construction of \cite{dwork2021outcome}, in which the distinguishers to be fooled have access to the code of the simulator (and are therefore per force larger), but this was not called out by the authors and its significance not explored. Thus our first result, stated next, can be viewed as a rephrasing of Theorem 5.10 of \cite{dwork2021outcome}, unencumbered by questions of the distinguishers' access to the code of the simulator. We shall discuss this connection further in Sections \ref{sec:related} and \ref{sec:construction}.

\begin{theorem}[Supersimulators]
\label{thm:main-succinct}
    For all $g : \{0, 1\}^n \to \{0, 1\}$, $k \in \N$, and $\eps > 0$, there exists a size bound $s \in [n, (n + (\log 1/\eps)^{O(1)})^{k^{1/\eps^2}}]$ and a simulator $h : \{0, 1\}^n \to [0, 1]$ of size at most $s$ such that for all distinguishers $A : \{0, 1\}^{n+1} \to \{0, 1\}$ of size at most $s^{k}$,
    \[
        \Bigl\lvert{\scalebox{1.2}{$\mathbb{E}$}_{\substack{x \sim \{0, 1\}^n \\ y|x \sim \cB(g(x))}} \big[A(x, y)\bigr] - \scalebox{1.2}{$\mathbb{E}$}_{\substack{x \sim \{0, 1\}^n \\ y|x \sim \cB(h(x))}} \bigl[A(x, y)\bigr]\Bigr\rvert}\le \eps.
    \]
\end{theorem}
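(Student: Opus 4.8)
The plan is to prove \cref{thm:main-succinct} by an energy-increment iteration of exactly the type underlying \cref{thm:ttv-for-circuits}, but with one change that is the whole point: the family of distinguishers that the current simulator is asked to fool is not fixed at the outset, but is re-set at every stage to be \emph{all} circuits of size at most $(\text{current simulator size})^{k}$, so it grows in lockstep with the simulator. I would start from the constant simulator $h^{(0)} \equiv 1/2$, which has some size $s_0 = n + (\log 1/\eps)^{O(1)}$ once we fix an $O(\log 1/\eps)$-bit precision. Given $h^{(t)}$ of size $s_t$: if $h^{(t)}$ already $\eps$-fools every $A : \{0,1\}^{n+1} \to \{0,1\}$ of size $\le s_t^{k}$, halt and output $h := h^{(t)}$, $s := s_t$. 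Otherwise pick such a bad distinguisher $A^{(t)}$ (advantage $> \eps$), set $a^{(t)}(x) := A^{(t)}(x,1) - A^{(t)}(x,0) \in \{-1,0,1\}$ --- so that, since $\E_{y\,|\,x \sim \cB(p)}[A^{(t)}(x,y)]$ is affine in $p$ with slope $a^{(t)}(x)$, after possibly negating we have $\E_x\bigl[(g(x) - h^{(t)}(x))\,a^{(t)}(x)\bigr] > \eps$ --- and perform the boosting update $h^{(t+1)} := \mathrm{round}_\delta\bigl(\mathrm{clip}_{[0,1]}(h^{(t)} + \eta\,a^{(t)})\bigr)$ with a dyadic step size $\eta \asymp \eps$ and a dyadic rounding grid of $O(\log 1/\eps)$ bits.

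Termination is the usual potential argument. With $\Phi_t := \E_x[(g(x) - h^{(t)}(x))^2]$, the one-step inequality for additive updates gives $\| g - \mathrm{clip}(h^{(t)} + \eta a^{(t)}) \|_2^2 \le \Phi_t - \Omega(\eps^2)$ --- using $\langle g - h^{(t)}, a^{(t)}\rangle > \eps$, $\| a^{(t)} \|_\infty \le 1$, and that coordinatewise projection onto $[0,1]$ is nonexpansive and fixes $g$ --- while rounding to a sufficiently fine (still $O(\log 1/\eps)$-bit) grid perturbs $\Phi$ by only $o(\eps^2)$, so $\Phi_{t+1} \le \Phi_t - \Omega(\eps^2)$ at each non-halting step. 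Since $g$ is Boolean, $\Phi_0 = 1/4$, and $\Phi_t \ge 0$ always, the iteration halts within $O(1/\eps^2)$ steps; I would optimize the hidden constants so that this count is at most $1/\eps^2$. The rounding step is not cosmetic: it keeps the potential drop honest, and --- crucially for the size analysis --- it forces every $h^{(t)}$ to have only $O(\log 1/\eps)$ output bits no matter how large $t$ is, so that the arithmetic overhead does not accumulate across rounds.

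For the size of the output, passing from $h^{(t)}$ to $h^{(t+1)}$ only wires together the circuit for $h^{(t)}$ (size $\le s_t$), two copies of the circuit for $A^{(t)}$ (size $\le s_t^k$ each, for the evaluations at $y=0$ and $y=1$), and $O(\log 1/\eps)$ gates of fixed-point arithmetic, clipping and rounding, so $s_{t+1} \le s_t + 2 s_t^k + O(\log 1/\eps) = O(s_t^{k})$. Iterating this $O(1/\eps^2)$ times turns $s_0 = n + (\log 1/\eps)^{O(1)}$ into a tower of $k$-th powers of height $O(1/\eps^2)$, i.e.\ $s \le (n + (\log 1/\eps)^{O(1)})^{k^{O(1/\eps^2)}}$; and because the iteration halted, this $h$ of size $s$ does $\eps$-fool every distinguisher of size $\le s^k$, while $s \ge s_0 \ge n$. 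The part requiring genuine care --- the main obstacle, such as it is --- is precisely this bookkeeping: to land inside the stated interval $[\,n,\ (n + (\log 1/\eps)^{O(1)})^{k^{1/\eps^2}}\,]$ one must verify that the round count is $\le 1/\eps^2$ and that the per-step blow-up is a clean $k$-th power up to absorbing absolute constants, which is exactly what the bounded-precision rounding buys by preventing the precision from growing over the $1/\eps^2$ rounds.

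Conceptually the only new ingredient beyond \cref{thm:ttv-for-circuits} is letting the distinguisher-size budget float with the simulator size instead of fixing it in advance; this is the move that evades all three lower-bound arguments recalled above, and the resulting target-dependent bound is the same ``the number of refinement rounds, and hence the final bound, depends on the object'' effect that already appears when Szemer\'edi's regularity lemma is iterated into its strong form. I would also remark (cf.\ Section~\ref{sec:construction}) that this construction is precisely the code-access outcome-indistinguishability argument of \cite{dwork2021outcome} with the distinguisher's access to the simulator's code made implicit: budgeting $A^{(t)}$ a size of $s_t^k \gg s_t$ in particular lets it hard-wire a copy of the circuit $h^{(t)}$, which is all that ``code access'' ever amounted to.
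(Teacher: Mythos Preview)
Your proposal is correct and is essentially the same argument the paper gives: the paper proves the abstract \cref{thm:supersimulator-basic} by the identical boosting iteration (start at $h_0\equiv 1/2$, take a maximally correlated $f_i$ from the expanding class $\cF_{G(S_i)}$, update by $\mathrm{clip}(h_i+\eps f_i)$ and round to precision $\eps'\le\eps^{10}$, run the $\Phi$-potential argument for $\le 1/(3\eps^2)$ rounds), and then specializes to \cref{thm:main-succinct} by taking $\cF$ to be the $n$ coordinate functions and $G(s_1,s_2)=(n,\max(n,s_2)^k)$. Your remark tying this to the code-access OI construction of \cite{dwork2021outcome} is also exactly the connection the paper makes.
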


To understand the statement of \cref{thm:main-succinct}, suppose that $k = 100$, $\eps = 1/10$, and the average-case complexity of $g$ is superpolynomial in $n$. In this case, one cannot hope to approximate $g$ accurately with a function $h$ of complexity $s = \mathrm{poly}(n)$, but \cref{thm:main-succinct} nevertheless guarantees that there exists a simulator of this size fooling all distinguishers of size up to $s^{100}$, well beyond the capabilities of the simulator provided by \cref{thm:ttv-for-circuits}. The particular choice of $s$ may depend on $g$, but it never leaves the interval $[n, n^{10^{300}}]$, which is fixed and independent of $g$. For clarity, we state this concrete version separately, as a corollary:
\begin{corollary}
    For all $g : \{0, 1\}^n \to \{0, 1\}$, $k \in \N$, and $\eps > 0$, there exists $s = \mathrm{poly}(n)$ and $h : \{0, 1\}^n \to [0, 1]$ of size at most $s$ such that for all $A : \{0, 1\}^{n + 1} \to \{0, 1\}$ of size at most $s^{100}$,
    \[
        \Bigl\lvert{\scalebox{1.2}{$\mathbb{E}$}_{\substack{x \sim \{0, 1\}^n \\ y|x \sim \cB(g(x))}} \big[A(x, y)\bigr] - \scalebox{1.2}{$\mathbb{E}$}_{\substack{x \sim \{0, 1\}^n \\ y|x \sim \cB(h(x))}} \bigl[A(x, y)\bigr]\Bigr\rvert}\le \frac{1}{10}.
    \]
\end{corollary}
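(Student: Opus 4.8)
The corollary is a direct instantiation of \cref{thm:main-succinct}: taking $k = 100$ and $\eps = 1/10$ there produces a size bound $s$ in $[n,\,(n + (\log 10)^{O(1)})^{100^{100}}]$, and since $\log 10$ is an absolute constant this interval lies inside $[n, n^{10^{300}}]$ for every $n \ge 2$, so $s = \mathrm{poly}(n)$; the conclusion of \cref{thm:main-succinct} is then verbatim the claimed inequality, with distinguisher bound $s^{k} = s^{100}$ and error $\eps = 1/10$. Thus the task reduces to proving \cref{thm:main-succinct}, for which I would proceed as follows.

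The plan is to run the boosting / energy-increment construction underlying \cref{thm:ttv-for-circuits}, but with a distinguisher class that \emph{grows together with the running size of the simulator}. Initialize $h$ to the constant function equal to $\E_x[g(x)]$ rounded to the nearest multiple of, say, $\eps/10$; this has circuit size $s_0 = n + (\log 1/\eps)^{O(1)}$. Throughout, let $s$ denote the current circuit size of $h$, and repeat: if there is a distinguisher $A$ of size at most $s^{k}$ whose advantage against the current $h$ exceeds $\eps$, perform the standard correction $h \leftarrow \mathrm{clip}_{[0,1]}\bigl(h + \gamma\,\widetilde A\bigr)$, where $\widetilde A(x) = A(x,1) - A(x,0) \in [-1,1]$ and $\gamma = \eps$ (negating $A$ if needed so the advantage is $+\langle g - h,\,\widetilde A\rangle$); otherwise, halt and output the pair $(h, s)$. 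When the loop halts, $h$ has size $s$ and fools every distinguisher of size at most $s^{k}$, exactly as \cref{thm:main-succinct} demands, and $s \ge n$ since $h$ reads $n$ input bits.

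The reason the loop terminates after few iterations---the crux---is that the TTV potential $\Phi(h) = \E_x\bigl[(g(x) - h(x))^2\bigr]$ starts at $\Var_x[g(x)] \le 1/4$ and, by the usual one-step computation (using that $A$'s advantage equals $\langle g - h,\,\widetilde A\rangle$, that $\|\widetilde A\|_2 \le 1$, and that clipping toward $\{0,1\}^n$ does not increase $L_2$-distance to $g$), drops by at least $\eps^2$ at every correction; crucially this drop depends only on $g$ and $h$ and \emph{not} on the size of the distinguisher used, so it remains valid however much the bound $s^{k}$ has inflated. Hence there are at most $1/(4\eps^2)$ corrections in total---this is precisely the adaptive-test-class phenomenon from graph and arithmetic regularity, where a bounded energy increment lets one iterate a regularity statement against a class that itself depends on the current object. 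Each correction replaces $s$ by at most $s + O(s^{k}) + O(\log 1/\eps) \le (Cs)^{k}$ for an absolute constant $C$ (using $s \ge s_0$ to absorb the additive terms), so iterating $\le 1/(4\eps^2)$ times from $s_0$ gives a final size at most $(C' s_0)^{k^{1/\eps^2}} \le (n + (\log 1/\eps)^{O(1)})^{k^{1/\eps^2}}$ once the constants are folded into the $O(1)$ exponent, matching the stated bound. (This construction is essentially the one implicit in Theorem 5.10 of \cite{dwork2021outcome}, as noted above.)

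The main obstacle I anticipate is not the potential bookkeeping but pinning down the per-step size blowup precisely enough: one must verify that the correction and clipping add only $O(s^{k})$ gates---so that the blowup is genuinely $s \mapsto s^{O(k)}$, not worse---and that $\gamma$ and the rounding of intermediate values can be carried with $O(\log 1/\eps)$ extra output bits without eroding the $\eps^2$ energy drop. Obtaining the clean exponent $k^{1/\eps^2}$ rather than $k^{O(1/\eps^2)}$ further forces one to bound the iteration count by $1/(4\eps^2)$ exactly (legitimate since $g \in \{0,1\}$ gives $\Phi \le 1/4$ initially) and the per-step blowup by $s \mapsto s^{k}$ up to lower-order factors. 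A minor point: the halting test quantifies over all size-$s^{k}$ circuits and is therefore non-effective, but this is harmless here, as only the \emph{existence} of $(h, s)$ is asserted.
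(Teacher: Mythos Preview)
Your proposal is correct and matches the paper's approach essentially line for line: the paper derives the corollary exactly as you do, by instantiating \cref{thm:main-succinct} with $k=100$ and $\eps=1/10$ (noting the resulting interval $[n, n^{10^{300}}]$), and proves \cref{thm:main-succinct} via \cref{thm:supersimulator-basic} using precisely the boosting/energy-increment argument you describe, with a distinguisher class that grows at each step according to a growth function $G$, a potential $\Phi(i) = \E(g(x)-h_i(x))^2 \le 1/4$ that drops by $\Omega(\eps^2)$ per correction regardless of the distinguisher size, and hence at most $O(1/\eps^2)$ iterations. The only cosmetic differences are that the paper initializes $h_0 \equiv 1/2$ rather than $\E[g]$, and it states the abstract version with an arbitrary growth function $G$ before specializing to $G(s) = s^k$.
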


While we have stated both Theorems \ref{thm:ttv-for-circuits} and \ref{thm:main-succinct} in the setting of size-bounded Boolean circuits on the domain $\cX = \{0, 1\}^n$, both results generalize to arbitrary distinguishers families on an arbitrary domain. We call this setting the \emph{abstract setting}, which we review further in \cref{sec:preliminaries}.

In \cref{sec:construction}, we prove the general version of \cref{thm:main-succinct}. Our proof is based on an iteration technique from the literature on graph regularity, in which strong notions of regularity are achieved by iterating cheaper constructions with a shrinking sequence of error parameters. In graph theory, this technique has been used to achieve state-of-the-art quantitative bounds for applications to graph removal \cite{fox2011new, conlon2012bounds} and to establish relationships between existing notions of regularity \cite{rodl2010regularity}. In our abstract setting, we use it to increase the power of the simulators, allowing them to fool distinguishers whose complexity surpasses their own by any specified \emph{growth function} $G : \N \to \N$. This proof technique also allows us to incorporate a shrinking error tolerance $\eps : \N \to (0, 1)$ that decays as a function of the simulator size.

The final complexity bound on the simulator will scale with an iterated composition of the growth function with itself. This leads, for example, to the doubly exponential upper bound on $s$ in \cref{thm:main-succinct}, which, while large, remains independent of the target function.

\paragraph{Applications} 

\cref{thm:ttv-for-circuits} and its abstract version have been shown to imply many deep results, such as the hardcore lemma \cite{impagliazzo1995hard}, the weak graph regularity lemma \cite{frieze1996regularity, frieze1996approximation}, and the dense model theorem \cite{green2008primes, tao2008primes, reingold2008dense}, along with new characterizations of computational notions of entropy \cite{vadhan2012pseudoentropy, vadhan2013uniform, zheng2014thesis}, and techniques for leakage simulation in cryptography \cite{jetchev2014fake}. More recently, a strengthened version of the complexity-theoretic regularity lemma based on a tool called \emph{multicalibration} \cite{hkrr} has been shown to yield stronger versions of the aforementioned results \cite{casacuberta2024complexity}, along with new results on \emph{omniprediction} in machine learning \cite{omnipredictors} and new results on the computational indistinguishability of product distributions \cite{marcussen2024characterizing}.

These developments raise a natural question: what are the capabilities of supersimulators in these downstream applications? In this paper, we answer this question in one of these domains (the computational indistinguishability of product distributions), which we shall now discuss.

Given $k$ independent samples from an unknown distribution $\cD_b \in \{\cD_0, \cD_1\}$ over $\{0, 1\}^n$, what is the best distinguishing advantage between $\cD_0$ and $\cD_1$ that one could hope to efficiently achieve? With a computationally unbounded distinguisher, the answer is $d_{\mathrm{TV}}(\cD_0^{\otimes k}, \cD_1^{\otimes k})$, which is easily seen to relate to $d_{\mathrm{TV}}(\cD_0, \cD_1)$ via the following inequalities:
\[
    d_{\mathrm{TV}}(\cD_0^{\otimes k}, \cD_1^{\otimes k}) \le 1 - (1 - d_{\mathrm{TV}}(\cD_0, \cD_1))^k \le k \cdot d_{\mathrm{TV}}(\cD_0, \cD_1).
\]
Analogously, let $d_s(\cD_0^{\otimes k}, \cD_1^{\otimes k})$ denote the best distinguishing advantage attainable by a circuit of size at most $s$. Establishing the relationship between $d_s(\cD_0^{\otimes k}, \cD_1^{\otimes k})$ and $d_s(\cD_0, \cD_1)$ turns out to be significantly more involved \cite{halevi2008degradation, geier2022tight}. To facilitate reasoning about the $k$-fold product, it would be ideal if each $\cD_b$ had a computationally indistinguishable proxy distribution $\widetilde{\cD}_b$ such that $d_s(\cD_0, \cD_1) \approx d_{\mathrm{TV}}(\widetilde{\cD}_0, \widetilde{\cD}_1)$. A recent result based on multicalibration achieves almost exactly this, but requires a second size bound $s' = sk/\eps^{O(1)} + (k/\eps)^{O(1)}$ to make the connection bidirectional \cite{marcussen2024characterizing}. Our first result in this setting improves the leading exponent on $\eps$ to $2$ using calibrated multiaccuracy, a weaker but more efficiently achievable condition than multicalibration that more closely resembles the original notion of complexity-theoretic regularity.\footnote{While \cite{marcussen2024characterizing} already report an exponent of $2$ here, it should be larger, owing to the use multicalibration, all known constructions of which incur a $O(s/\eps^c)$ dependence on $s$ for $c > 2$, depending on the precise notion of multicalibration used (e.g. $c=4$ to achieve $\Pr[\max_{f \in \cF} \abs{\E[f(x)(g(x) - h(x))|h(x)]} \le \eps] \ge 1 - \eps$).}

\begin{theorem}
\label{thm:mpv-cma-succinct}
    For all $s, k \in \N$, $\eps > 0$ and $\cD_0, \cD_1$, there exist $\widetilde{\cD}_0, \widetilde{\cD}_1$ such that $d_s(\cD_b, \widetilde{\cD}_b) \le \eps$ and
    \[
        d_{s}(\cD_0^{\otimes k}, \cD_1^{\otimes k}) - k\eps \le d_{\mathrm{TV}}(\widetilde{\cD}_0^{\otimes k}, \widetilde{\cD}_1^{\otimes k}) \le d_{s'}(\cD_0^{\otimes k}, \cD_1^{\otimes k}) + k\eps,
    \]
    where $s' = O(sk/\eps^2) + (k/\eps)^{O(1)}$. One can enforce $\widetilde{\cD}_0 = \cD_0$ with $s' = O(sk/\eps^4) + (k/\eps)^{O(1)}$.
\end{theorem}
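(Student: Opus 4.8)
\medskip
\noindent\textbf{Proof plan.}
The plan is to derive \cref{thm:mpv-cma-succinct} from the existence of a single bounded-complexity predictor of the Bayes posterior that is simultaneously \emph{multiaccurate} against size-$s$ circuits and \emph{calibrated}, and then to read the proxies $\tcD_0,\tcD_1$ off that predictor. Write $\cD=\tfrac12(\cD_0+\cD_1)$ for the equal mixture and, for $x\sim\cD$ carrying a uniform secret label $b\in\{0,1\}$, put $g(x)=\Pr[b=1\mid x]=\cD_1(x)/(\cD_0(x)+\cD_1(x))$. Two elementary identities drive the whole argument: $d_{\mathrm{TV}}(\cD_0,\cD_1)=\E_{x\sim\cD}[\,\abs{2g(x)-1}\,]$, and for every test $A$ one has $\E_{x\sim\cD_1}[A(x)]-\E_{x\sim\cD_0}[A(x)]=2\,\E_{x\sim\cD}[A(x)(2g(x)-1)]$, whence $d_s(\cD_0,\cD_1)=2\max_{\abs{A}\le s}\abs{\E_{x\sim\cD}[A(x)(2g(x)-1)]}$. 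I would then invoke a \emph{calibrated} refinement of \cref{thm:ttv-for-circuits} --- obtained by interleaving ``bucket-and-average'' calibration steps into the usual potential-function construction --- that produces $h:\{0,1\}^n\to\{0,\tfrac1m,\dots,1\}$ with $m=(k/\eps)^{O(1)}$ of circuit size $O(s/\eps^2)+(k/\eps)^{O(1)}$ such that (i) $\abs{\E_{x\sim\cD}[f(x)(g(x)-h(x))]}=O(\eps)$ for every size-$s$ circuit $f$, and (ii) the total calibration error $\sum_v\abs{\E_{x\sim\cD}[\mathbf{1}[h(x)=v](g(x)-h(x))]}$ is $O(\eps)$.

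Given such an $h$, I would define the proxies through their densities, $\tcD_1\propto\cD\cdot h$ and $\tcD_0\propto\cD\cdot(1-h)$, normalized by $2\E_\cD[h]$ and $2\E_\cD[1-h]$; by (ii) these normalizers are $1\pm O(\eps)$, so the renormalization is harmless and its $O(\eps)$ effect folds into the final slack. The resulting $\tcD_0,\tcD_1$ are honest distributions whose Bayes posterior equals $h$ up to $O(\eps)$, and the single-sample guarantee is immediate from the second identity together with (i): $\abs{\E_{\cD_1}[A]-\E_{\tcD_1}[A]}=2\abs{\E_{x\sim\cD}[A(x)(g(x)-h(x))]}+O(\eps)=O(\eps)$ for size-$s$ circuits $A$, so $d_s(\cD_b,\tcD_b)\le\eps$.

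It remains to sandwich $d_{\mathrm{TV}}(\tcD_0^{\otimes k},\tcD_1^{\otimes k})$. The \emph{lower} bound is easy: feed a size-$s$ circuit $A^\star$ witnessing $d_s(\cD_0^{\otimes k},\cD_1^{\otimes k})$ to the proxies and run the coordinate-by-coordinate hybrid argument; each hybrid step hardwires $k-1$ coordinates --- an operation that never increases circuit size --- leaving a size-$\le s$ single-coordinate test between $\cD_b$ and $\tcD_b$, whose advantage is $O(\eps)$ by (i), so $d_{\mathrm{TV}}(\tcD_0^{\otimes k},\tcD_1^{\otimes k})\ge d_s(\cD_0^{\otimes k},\cD_1^{\otimes k})-O(k\eps)$ using only multiaccuracy. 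For the \emph{upper} bound, recall that the optimal (unbounded) distinguisher between the two product proxies is a threshold on the log-likelihood ratio $\sum_{i=1}^k\log\frac{h(x_i)}{1-h(x_i)}$. Since $h$ is grid-valued, this test is computed by a circuit $\widetilde A^\star$ that evaluates $h$ on each coordinate, rounds each log-ratio to $O(\log(k/\eps))$ bits, sums, and compares against the appropriate constant (handling $h(x_i)\in\{0,1\}$ separately), of total size $k\cdot\mathrm{size}(h)+(k/\eps)^{O(1)}=O(sk/\eps^2)+(k/\eps)^{O(1)}$, which we call $s'$; a margin argument shows $\widetilde A^\star$ distinguishes $\tcD_0^{\otimes k}$ from $\tcD_1^{\otimes k}$ with advantage at least $d_{\mathrm{TV}}(\tcD_0^{\otimes k},\tcD_1^{\otimes k})-\eps$. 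Transferring $\widetilde A^\star$ back to the real product distributions then costs a further $O(k\eps)$, giving $d_{\mathrm{TV}}(\tcD_0^{\otimes k},\tcD_1^{\otimes k})\le d_{s'}(\cD_0^{\otimes k},\cD_1^{\otimes k})+O(k\eps)$.

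I expect this last transfer to be the main obstacle, and it is the point at which calibration --- not merely multiaccuracy --- is indispensable. The circuit $\widetilde A^\star$ carries $k$ copies of $h$ and so has size $\approx s'\gg s$, so the multiaccuracy guarantee (i) says nothing about it, and demanding that $h$ be multiaccurate against size-$s'$ circuits would make the bound on $\mathrm{size}(h)$ circular. The resolution is that \emph{once $k-1$ coordinates of $\widetilde A^\star$ are hardwired}, the surviving single-coordinate test degenerates to a level-set threshold $\mathbf{1}[h(x)\ge t]$ of $h$ itself; and because $h$ is grid-valued, $\mathbf{1}[h\ge t]=\sum_{v\ge t}\mathbf{1}[h=v]$ is an exact sum of calibration-bucket indicators, so its advantage between $\cD_b$ and $\tcD_b$ is controlled by the total calibration error (ii) --- a self-referential property of $h$ that holds no matter how large $\mathrm{size}(h)$ is. Summing the $k$ hybrid contributions completes the transfer. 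Finally, the variant with $\tcD_0=\cD_0$ runs the same recipe on (suitably truncated) likelihood ratios $\cD_1(x)/\cD_0(x)$ in place of the posterior $g(x)$: the extra $1/\eps$ dynamic range of these ratios forces the multiaccuracy error in (i) to be driven down to $O(\eps^2)$, inflating $\mathrm{size}(h)$, and hence $s'$, to $O(sk/\eps^4)+(k/\eps)^{O(1)}$. Throughout, the various $O(\eps)$ and $O(k\eps)$ slacks collapse to $\eps$ and $k\eps$ respectively once the error budgets in (i)--(ii) are taken to be sufficiently small absolute-constant multiples of $\eps$, which affects $\mathrm{size}(h)$ by only a constant factor.
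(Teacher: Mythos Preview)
Your two-proxy argument is essentially the paper's: the proxies $\tcD_b\propto h^{b}(1-h)^{1-b}\cdot\cD$ are exactly the paper's $\hcD_b$ (which are within $O(\gamma)$ in total variation of its $\tcD_b$), the lower bound is the same hybrid over (i), and the upper bound is precisely the paper's part (ii)---the paper writes the likelihood-ratio test as $h'(z)=\bm{1}[\prod h(z_i)>\prod(1-h(z_i))]$ and runs the identical hybrid, observing (as you do) that after fixing $k-1$ coordinates the residual test is a threshold $w(h(\cdot))$ of $h$, controlled by calibration rather than multiaccuracy.

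The one place where your route diverges from the paper is the single-proxy case. You propose to work directly with (truncated) likelihood ratios $\cD_1/\cD_0$ under base measure $\cD_0$; the paper instead tilts the mixture to $\cD_\cX=(1-\eps)\cD_0+\eps\cD_1$ and simulates the posterior $g=\eps\cD_1/((1-\eps)\cD_0+\eps\cD_1)$ under $\cD_\cX$. Both approaches lead to the same $\eps^2$-regularity requirement (hence the $1/\eps^4$), since in each case the relevant reweighting factor has dynamic range $\Theta(1/\eps)$. The paper's tilted-mixture formulation is a bit cleaner---$g$ is automatically $[0,1]$-valued with no truncation choice to make, $\tcD_0$ is dispensed with because $\cD_\cX$ is already $\eps$-close to $\cD_0$ in total variation, and the optimal product test becomes simply $\bm{1}[\prod h(z_i)>\eps^k]$---whereas your sketch leaves the truncation level and the treatment of the region $\{\cD_0=0\}$ unspecified. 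Either route should go through once those details are fixed.
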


Our second result in this setting eliminates the complexity gap entirely using supersimulators. As in \cref{thm:main-succinct}, we now allow ourselves to alter the distinguisher size bound $s$ within a certain interval, with the specific choice depending on the pair $\cD_0, \cD_1$.

\begin{theorem}
\label{thm:mpv-super-succinct}
    For all $s, k \in \N$, $\eps > 0$, $\cD_0, \cD_1$, there exist $s', \widetilde{\cD}_0, \widetilde{\cD}_1$ such that $d_{s'}(\cD_b, \widetilde{\cD}_b) \le \eps$ and
    \[d_{s'}(\cD_0^{\otimes k}, \cD_1^{\otimes k}) - k\eps \le d_{\mathrm{TV}}(\widetilde{\cD}_0^{\otimes k}, \widetilde{\cD}_1^{\otimes k}) \le d_{s'}(\cD_0^{\otimes k}, \cD_1^{\otimes k}) + k\eps,\]
    where $s' \in [s, k^{1/\eps^2}(s + (k/\eps)^{O(1)})]$. One can enforce $\tcD_0 = \cD_0$ with $s' \in [s, k^{1/\eps^4}(s + (k/\eps)^{O(1)})]$.
\end{theorem}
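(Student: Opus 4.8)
The plan is to rerun the reconstruction behind \cref{thm:mpv-cma-succinct}, but to feed it a supersimulator (the abstract, arbitrary-growth-function version underlying \cref{thm:main-succinct}, proved in \cref{sec:construction}) in place of the one-shot predictor, so that a single size bound $s'$ serves on both sides of the sandwich. Set $\mu = \tfrac{1}{2}(\cD_0 + \cD_1)$, draw $b$ uniformly from $\{0,1\}$ and $x \sim \cD_b$, and let $g(x) = \Pr[b = 1 \mid x]$, so that $\cD_1 = 2g\mu$ and $\cD_0 = 2(1-g)\mu$. For a predictor $h : \{0,1\}^n \to [0,1]$ define $\tcD_1 = 2h\mu$ and $\tcD_0 = 2(1-h)\mu$ (normalized to unit mass, a $1 \pm O(\eps)$ adjustment that multiaccuracy against the constant distinguisher makes harmless); then $\E_{\tcD_b}[A] - \E_{\cD_b}[A] = \pm\,2\,\E_{x\sim\mu}[A(x)(h(x)-g(x))]$ for every $A$, so multiaccuracy of $h$ against size-$s'$ circuits with respect to $\mu$ yields $d_{s'}(\cD_b, \tcD_b) \le \eps$. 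Crucially, the likelihood ratio $\tcD_1(x)/\tcD_0(x) = h(x)/(1-h(x))$ involves $h$ alone --- the factor $\mu$ cancels --- so, after clipping $h$ into $[\delta, 1-\delta]$ for a suitable $\delta = (\eps/k)^{\Theta(1)}$ (which barely changes its size or multiaccuracy), the optimal distinguisher for $\tcD_0^{\otimes k}$ versus $\tcD_1^{\otimes k}$ --- a threshold on $\sum_{i=1}^{k}\log\frac{h(x_i)}{1-h(x_i)}$ computed to adequate precision --- is a circuit of size at most $k\cdot\mathrm{size}(h) + (k/\eps)^{O(1)}$.

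Now invoke the abstract supersimulator with target $g$, base distribution $\mu$, distinguisher class the size-bounded Boolean circuits, error parameter a sufficiently small constant times $\eps$, and growth function $G(\sigma) := s + k\sigma + (k/\eps)^{O(1)}$, chosen so that $G(\sigma)$ dominates the size of the likelihood-ratio test built from any $h$ of size $\le \sigma$ and so that $G(\sigma) \ge s$. It returns a size bound $\sigma$ --- depending on $\cD_0, \cD_1$ but confined to a bounded interval --- and an $h$ of size $\le \sigma$ that is $G(\sigma)$-multiaccurate with respect to $\mu$. Put $s' := G(\sigma)$. Both halves of the sandwich follow from a standard hybrid argument: for any circuit $A$ of size $s'$, hardwiring all but the $i$-th coordinate makes $A$ a size-$(\le s')$ distinguisher of $\cD_b$ from $\tcD_b$, of advantage at most $\eps$; summing over the $k$ coordinates and $b \in \{0,1\}$ gives $\abs{d_{s'}(\cD_0^{\otimes k}, \cD_1^{\otimes k}) - d_{s'}(\tcD_0^{\otimes k}, \tcD_1^{\otimes k})} \le k\eps$ (after rescaling $\eps$ by the harmless constant). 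The lower bound of the theorem is then immediate from $d_{s'}(\tcD_0^{\otimes k}, \tcD_1^{\otimes k}) \le \TV(\tcD_0^{\otimes k}, \tcD_1^{\otimes k})$, and the upper bound follows because the size-$s'$ likelihood-ratio test witnesses $d_{s'}(\tcD_0^{\otimes k}, \tcD_1^{\otimes k}) = \TV(\tcD_0^{\otimes k}, \tcD_1^{\otimes k})$ up to the lower-order precision-and-clipping error, which is absorbed into $\eps$.

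For the variant with $\tcD_0 = \cD_0$ the symmetric reconstruction is unavailable; instead take $\tcD_1 = r\cD_0$ for a predictor $r$ of the (clipped) density ratio $\cD_1/\cD_0$ that is additionally \emph{calibrated}, so that $\E_{\cD_0}[r] = 1$ makes $\tcD_1$ a genuine distribution and multiaccuracy of $r$ gives $d_{s'}(\cD_1, \tcD_1) \le \eps$; here the supersimulator is run to produce a calibrated multiaccurate predictor. For the size accounting, the supersimulator halts after $m$ rounds, each decreasing a bounded potential by a fixed amount --- $\Omega(\eps^2)$ in the first case, $\Omega(\eps^4)$ in the calibrated case --- so $m = O(1/\eps^2)$ respectively $O(1/\eps^4)$, and each round enlarges the running bound from $\sigma_j$ to $\sigma_{j+1} \le \sigma_j + O(G(\sigma_j)) \le O(k)\bigl(\sigma_j + s + (k/\eps)^{O(1)}\bigr)$. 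Unfolding the recursion gives $\sigma \le k^{O(m)}\bigl(s + (k/\eps)^{O(1)}\bigr)$, hence $s' = G(\sigma) \in \bigl[s,\,k^{1/\eps^2}(s + (k/\eps)^{O(1)})\bigr]$, respectively $\bigl[s,\,k^{1/\eps^4}(s + (k/\eps)^{O(1)})\bigr]$, once the constant hidden in the exponent is absorbed by a reparametrization of $\eps$.

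The step I expect to be the real obstacle is not any individual inequality but the circular pair of complexity demands on $h$: the likelihood-ratio test built from $h$ has size about $k\cdot\mathrm{size}(h)$, yet $h$ must fool distinguishers of exactly that size for both hybrid arguments to close --- a requirement no one-shot regularity lemma can meet, and precisely what the iterated supersimulator of \cref{sec:construction} is built to break, at the price of letting $s'$ (and thus the iteration depth's effect on the final size) depend on $\cD_0, \cD_1$. Driving the exponent of $k$ down to exactly $1/\eps^2$ (resp.\ $1/\eps^4$) rather than $O(1/\eps^2)$ will require tuning the per-round error and clipping level so that the potential drops by a full $\eps^2$ while the running size grows by essentially a factor $k$ per round.
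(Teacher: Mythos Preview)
Your two-proxy argument is correct and tracks the paper's proof closely: the same mixture $\mu=\tfrac12(\cD_0+\cD_1)$ and target $g$, the same growth function $G(\sigma)\approx k\sigma+s+(k/\eps)^{O(1)}$ fed to the supersimulator of \cref{sec:construction}, and the same recursion unrolled to get $s'\le k^{O(1/\eps^2)}(s+(k/\eps)^{O(1)})$. One instructive deviation: your hybrid relies only on $(\cF_{s'},\eps)$-multiaccuracy of $h$ (hardwiring the off-coordinates of any size-$s'$ test keeps it size $\le s'$), whereas the paper routes through \cref{thm:mpv-cma-1}(ii), whose hybrid uses \emph{calibration} of $h$ to move the explicit likelihood-ratio test from $\hcD_b^{\otimes k}$ to $\cD_b^{\otimes k}$. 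Your route is actually cleaner for the supersimulator version and shows calibration is not strictly needed in the two-proxy case; the paper's route buys a sharper constant ($O(k\gamma)$ with $\gamma\ll\eps$), which is irrelevant once the target error is $k\eps$.

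The single-proxy argument has a genuine gap. Setting $\tcD_1=r\,\cD_0$ for a predictor $r$ of the clipped ratio $\cD_1/\cD_0$ does not work in general: the ratio can be unbounded, indeed $+\infty$ wherever $\cD_0$ vanishes but $\cD_1$ does not, and no clipping level $M$ controls the incurred error $\Pr_{x\sim\cD_1}[\cD_1(x)/\cD_0(x)>M]$ uniformly in $\cD_0,\cD_1$; moreover the regularity lemma applies to $[0,1]$-valued targets, so you would have to rescale by $M$ and pay a factor-$M$ blowup in the accuracy you need. The paper's device for the $\tcD_0=\cD_0$ case (\cref{thm:mpv-cma-2}) is different and is the missing idea: work under the \emph{$\eps$-tilted} mixture $\cD_\cX=(1-\eps)\cD_0+\eps\cD_1$ with target $g=\eps\cD_1/\cD_\cX\in[0,1]$, which is always bounded regardless of $\cD_0,\cD_1$; then $\TV(\cD_\cX,\cD_0)\le\eps$ supplies the missing proxy for $\cD_0$ for free. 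Your explanation of the $1/\eps^4$ exponent is also off: calibration does not shrink the per-round potential drop to $\Omega(\eps^4)$ (the iterate count in \cref{thm:supersimulator-basic} stays $O(1/\eps^2)$ at error $\eps$, calibrated or not). The doubling of the exponent comes from the $1/\eps$ weight on $\cD_1$ inside $\cD_\cX$: extracting $\eps$-indistinguishability of $\cD_1$ and $\tcD_1$ requires $\eps^2$-regularity of $h$, so the supersimulator must be run at error $\eps^2$, costing $O(1/(\eps^2)^2)=O(1/\eps^4)$ rounds and yielding $s'\le k^{1/\eps^4}(s+(k/\eps)^{O(1)})$.
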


An advantage of \cref{thm:mpv-super-succinct} is that the term $d_{s'}(\cD_0^{\otimes k}, \cD_1^{\otimes k})$ appears on both the left and right side of the chain of inequalities. This yields a tighter characterization in terms of $d_{\mathrm{TV}}(\widetilde{\cD}_0^{\otimes k}, \widetilde{\cD}_1^{\otimes k})$ than in \cref{thm:mpv-cma-succinct}, in which distinct terms $d_{s}(\cD_0^{\otimes k}, \cD_1^{\otimes k}) \le d_{s'}(\cD_0^{\otimes k}, \cD_1^{\otimes k})$, which potentially differ significantly, are used for the left and right side of the chain of inequalities.

\subsection{Summary of Contributions}

\begin{enumerate}[(1)]
    \item We prove a new, powerful variant of the complexity-theoretic regularity lemma, in which the simulator is capable of fooling distinguishers significantly more complex than itself. We call these \emph{supersimulators}. As in \cite{trevisan2009regularity}, our result holds in the abstract setting, where pseudorandomness is measured in terms of an arbitrary collection of distinguisher functions on an arbitrary domain.
    
    Specializing to the setting of Boolean circuits as in \cref{thm:main-succinct}, our result at first seems better than theoretically possible according to known barriers to improving distribution simulators. We bypass these barriers by letting the target function influence the distinguisher size bound, which we choose using an iteration trick from the graph regularity literature.

    As discussed, the existence of supersimulators is implicit in a construction of \cite{dwork2021outcome}, but this was not called out by the authors. Our proof yields a conclusion that is stronger qualitatively, incorporating a shrinking sequence of error parameters rather than a fixed parameter $\eps > 0$.

    \item We tighten the parameters of a recent, multicalibration-based characterization of the computational indistinguishability of product distributions \cite{marcussen2024characterizing}. Our result relies only on the simpler notion of calibrated multiaccuracy (i.e. complexity-theoretic regularity combined with calibration), leading to improved efficiency and a simpler proof.

    \item Also in the setting of product distributions, we show that incorporating the supersimulators that we derived in (1) yields an even tighter variant of the characterization, fully eliminating a complexity gap present in both the original result of \cite{marcussen2024characterizing} and in our version derived in (2). In the present variant, the variable distinguisher size bound from our supersimulator construction leads to a similar change in quantifiers in the characterization.
\end{enumerate}

\subsection{Related Work}
\label{sec:related}

\paragraph{The Regularity Barrier} In the Boolean circuit case, \cref{thm:ttv-for-circuits} asserts that for any size bound $s \in \N$, there exists a simulator $h$ of size $s' = O(s)$ that fools distinguishers of size $s < s'$. The question of the existence of size-$s'$ simulators $h$ that fool distinguishers of size $s  \gg s'$ has already been considered by multiple works \cite{trevisan2009regularity, chen2018complexity}, with negative results.

First, Remark 1.6 of \cite{trevisan2009regularity} constructed two counterexamples ruling out the possibility of any version of \cref{thm:ttv-for-circuits} with $s \ge (ns')^{1 + \Omega(1)}$. One of their examples involves a target function $g : \{0, 1\} \to \{0, 1\}$ of complexity $\tilde{O}(n s')$ sampled from a family of $O(s' \log s')$-wise independent hash functions. Using a Chernoff-like concentration inequality for $k$-wise independence (e.g. Problem 3.8 of \cite{vadhan2012pseudorandomness}), they show that with high probability over the choice of $g$, every function $h$ of complexity at most $s'$ is at most $\frac{1}{10}$-correlated with $g$. Consequently, one cannot hope for $h$ to fool the distinguisher $f = g$ of size $s = \tilde{O}(ns')$.

Crucially, in the aforementioned counterexample, the target function $g$ depends on the simulator size $s'$ (for example, in one standard construction of the hash family, $g$ would be a polynomial whose degree grows with $s'$). The same is true of their other couterexample, which is based on a black-box application of randomness extractors for efficiently sampleable distributions with high min-entropy. Therefore, these counterexamples do not rule out \cref{thm:main-succinct}, which only asserts that for all $g$, there exists at least one ``good'' simulator size $s'$ in a bounded interval.

The subsequent work of \cite{chen2018complexity} provided an even stronger lower bound on the simulator size, under additional assumptions on its structure. Specifically, they argue that any ``black-box'' $(\cF, \eps)$-regular simulator $h$ for $g$ under $\cD$ must make $\Omega(1/\eps^2)$ oracle calls to functions in $\cF$. In the Boolean circuit case, this result may lead one to believe that simulators of size $s'$ can only hope to fool \emph{substantially} smaller distinguishers, and not even those of slightly smaller sizes $s \in [\eps^2 s', s']$. From one point of view, \cref{thm:main-succinct} shows that in the context of Boolean circuits, removing this structural requirement on $h$ impacts the range of attainable simulator and distinguisher sizes.

\paragraph{Outcome Indistinguishability} The outcome indistinguishability (OI) hierarchy \cite{dwork2021outcome} gives a unified view of the complexity-theoretic regularity lemma and several strengthened variants of it.

Simulators satisfying no-access OI, at the lowest level of the hierarchy, are precisely those given by the complexity-theoretic regularity lemma. They are equivalently known as multiaccurate predictors \cite{hkrr, kearns2018gerrymandering} (see also \cite{dwork2023pseudorandomness}). Simulators satisfying sample-access OI, in which the distinguisher receives $h(x)$ as an auxiliary input, are equivalently known as multicalibrated predictors \cite{hkrr}. In oracle-access OI, distinguishers may make a limited number of oracle calls to the simulators.

In code-access OI, the highest level of the hierarchy, each distinguisher receives a string representation of a circuit computing the simulator. Naturally, in this setting, the construction of \cite{dwork2021outcome} allows the complexity of the distinguishers to scale with the length of this representation, which is presented as an auxiliary input. As we shall see in \cref{sec:construction}, this flexibility can be used to show the existence of supersimulators, regardless of considerations of access to the simulator's code.

\paragraph{Calibrated Multiaccuracy} 
The idea to replace the use of multicalibration (a sometimes costly strengthening of the regularity lemma) with calibrated multiaccuracy (a minimal-cost strengthening of the regularity lemma) to achieve quantitative gains in downstream applications has been a theme of multiple recent and concurrent works.

One such work shows that the concept of omniprediction in machine learning can be achieved via calibrated multiaccuracy \cite{gopalan2023lossoi} (see also \cite{dwork2023pseudorandomness}), which had previously shown to be achievable via multicalibration \cite{omnipredictors}. Another work shows that for hardcore set construction, optimal density can be achieved via calibrated multiaccuracy \cite{casacuberta2025global}, which had previously been shown to be achievable via multicalibration \cite{casacuberta2024complexity}, and even earlier with suboptimal density via the original regularity lemma \cite{trevisan2009regularity}. Most recently, \cite{hu2025generalized} shows that calibrated multiaccuracy can be used to extend and improve the efficiency of prior regularity-based characterizations of computational notions of entropy \cite{vadhan2012pseudoentropy, vadhan2013uniform, zheng2014thesis, casacuberta2024complexity}.

\paragraph{Other Related Work}
It was shown in \cite{blasiok2024neural} that minimizing squared loss over the class of neural networks of a given size automatically implies multicalibration, except for a finite and bounded number of ``unlucky'' sizes. The intuition behind this result is related to the intuition for the existence of supersimulators, in the sense that there cannot be too many (substantial) jumps in accuracy as the model/circuit size is varied.

\nocite{feldman2012fourier, de2012chow}

\subsection{Paper Organization} In \cref{sec:preliminaries}, we review the abstract version of the complexity-theoretic regularity lemma and the iteration technique from the graph regularity literature on which we rely. In \cref{sec:construction}, we present and analyze our general supersimulator construction, proving \cref{thm:main-succinct}. In \cref{sec:mpv}, we prove our two results on the computational indistinguishability of product distributions, one using calibrated multiaccuracy (\cref{thm:mpv-cma-succinct}), and the other using supersimulators (\cref{thm:mpv-super-succinct}).

\section{Mathematical Background}
\label{sec:preliminaries}

In this paper, $\cX$ denotes an arbitrary finite set, $\Delta(\cX)$ denotes the set of probability distributions on $\cX$, and $\{\cX \to \cY\}$ denotes the set of functions from $\cX$ to $\cY$.

\paragraph{Complexity-Theoretic Regularity} Given a distribution $\cD \in \Delta(\cX)$, a family of \emph{distinguishers} $\cF \subseteq \{\cX \to [0, 1]\}$, an error parameter $\eps > 0$, a \emph{target} function $g : \cX \to [0, 1]$, and a \emph{simulator} $h : \cX \to [0, 1]$, we say that $h$ is a \emph{$(\cF, \eps)$-regular simulator for $g$ under $\cD$} if for all $f \in \cF$,
\[
    \Bigl\lvert\E_{x \sim \cD}[f(x)(g(x) - h(x))]\Bigr\rvert \le \eps.
\]
Given any $\cD \in \Delta(\cX)$ and $g : \cX \to [0, 1]$, there exists a trivial $(\cF, \eps)$-regular simulator for $g$ under $\cD$, namely $h = g$. The complexity-theoretic regularity lemma guarantees the existence of a much better simulator, whose complexity does not scale with $g$, but rather depends only on $\cF$ and $\eps$. We will state the lemma in terms of the class $\cF_{(s_1, s_2)}$ of functions $h$ with the following property: there exist functions $f_1, \ldots, f_k \in \cF$ with $k \le s_1$ and a Boolean circuit of size at most $s_2$ that computes the output of $h(x)$ in binary given only $f_1(x), \ldots, f_k(x)$ as input.

\begin{theorem}[Complexity-Theoretic Regularity \cite{trevisan2009regularity}]
\label{thm:ttv}
    For all $\cD \in \Delta(\cX)$, $\cF \subseteq \{\cX \to [0, 1]\}$, $g : \cX \to [0, 1]$, and $\eps \in (0, 1)$, there exists an $(\cF, \eps)$-regular simulator in $\cF_{(O(1/\eps^{2}), \, \tilde{O}(1/\eps^{2}))}$.
\end{theorem}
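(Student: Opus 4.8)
The plan is to establish \cref{thm:ttv} by the standard energy-decrement (boosting) argument, but keeping track of circuit complexity throughout. Define the potential $\Phi(h) = \E_{x \sim \cD}\bigl[(g(x) - h(x))^2\bigr]$, which lies in $[0, 1]$ since $g, h$ take values in $[0,1]$. We build a sequence of candidate simulators $h_0, h_1, \ldots$, starting from the constant $h_0 \equiv 1/2$. At step $t$, if $h_t$ is already an $(\cF, \eps)$-regular simulator for $g$ under $\cD$, we stop and output it. Otherwise, by definition there exist $f_t \in \cF$ and a sign $\sigma_t \in \{+1, -1\}$ with $\sigma_t \E_{x \sim \cD}[f_t(x)(g(x) - h_t(x))] > \eps$, and we set $h_{t+1}(x) = \mathrm{clip}_{[0,1]}\bigl(h_t(x) + \eps\,\sigma_t f_t(x)\bigr)$, where $\mathrm{clip}_{[0,1]}$ truncates a real number into $[0,1]$.

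The key step is the energy decrement. Because $g(x) \in [0,1]$ and $\mathrm{clip}_{[0,1]}$ is a contraction toward every point of $[0,1]$, we have $\Phi(h_{t+1}) \le \E_{x \sim \cD}\bigl[(g(x) - h_t(x) - \eps\sigma_t f_t(x))^2\bigr]$. Expanding the square and using $\sigma_t \E[f_t(g - h_t)] > \eps$ together with $0 \le f_t(x) \le 1$ gives $\Phi(h_{t+1}) < \Phi(h_t) - 2\eps^2 + \eps^2 = \Phi(h_t) - \eps^2$. Since $\Phi \ge 0$ and $\Phi(h_0) \le 1/4$, the process must terminate after at most $T \le 1/(4\eps^2) = O(1/\eps^2)$ steps, necessarily with an $(\cF, \eps)$-regular simulator $h = h_T$.

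It remains to bound the complexity of $h$. By construction $h$ depends only on $f_1(x), \ldots, f_T(x)$: one starts from $1/2$ and performs $T = O(1/\eps^2)$ rounds, each adding $\pm\eps f_i(x)$ and clipping to $[0,1]$. This already places $h$ in $\cF_{(O(1/\eps^2),\, \cdot)}$. For the circuit-size bound, we discretize: maintain all running values to $O(\log(1/\eps))$ bits of precision, use $\eps$ rounded to the same precision, and read each input $f_i(x)$ to $O(\log 1/\eps)$ bits. Each add-and-clip is then a $\tilde O(1)$-size Boolean operation, for a total of $\tilde O(1/\eps^2)$ gates, so $h \in \cF_{(O(1/\eps^2),\, \tilde O(1/\eps^2))}$. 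The accumulated rounding error across $O(1/\eps^2)$ steps is $\mathrm{poly}(\eps)$ and can be absorbed by running the whole argument with $\eps/2$ in place of $\eps$, after checking that the decrement inequality degrades only in lower-order terms and the step count stays $O(1/\eps^2)$.

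The main obstacle I anticipate is not the energy argument, which is routine, but the discretization bookkeeping: one must verify that rounding the increments and the running value to $O(\log 1/\eps)$ bits still yields a per-step decrement of $\Omega(\eps^2)$ (so that $T = O(1/\eps^2)$ survives), and one must reconcile feeding bounded-precision encodings of the real-valued $f_i(x)$ into a Boolean circuit with the definition of $\cF_{(s_1, s_2)}$. Everything else is a direct adaptation of the proof of Theorem~1.1 of \cite{trevisan2009regularity}.
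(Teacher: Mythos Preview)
Your proposal is correct and is essentially the standard boosting/energy-decrement proof of the TTV regularity lemma; the paper does not give its own proof of \cref{thm:ttv} (it is cited as background from \cite{trevisan2009regularity}), but your argument matches almost line-for-line the proof the paper gives for the more general \cref{thm:supersimulator-basic}, including the choice $h_0 \equiv 1/2$, the potential $\Phi(h) = \E(g-h)^2$, the clipped additive update $h_{t+1} = \mathrm{clip}(h_t + \eps\sigma_t f_t)$, the $\Omega(\eps^2)$ decrement, and the rounding to precision $\mathrm{poly}(\eps)$ to control circuit size.
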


In the introduction, we had focused on the special case where $\cD$ is the uniform distribution on $\cX = \{0, 1\}^n$ and $\cF$ contains all circuits of size at most $s$, but this need not be the case. We had also considered randomized distinguishers $A(x, y)$ that take as a second input a binary label $y$ sampled from either $\cB(g(x))$ or $\cB(h(x))$, but here we consider deterministic distinguishers $f(x)$. Of course, the latter perspective subsumes the former; given $A$, simply set $f_A(x) = \E[A(x, 1) - A(x, 0)]$, where the expectation is taken over the randomness of $A$. In this case, the distinguishing advantage of $A$ is precisely $\abs{\E_{x \sim \cD}[f_A(x)(g(x) - h(x))]}$.

\paragraph{Multicalibration}

A simulator (a.k.a. predictor) $h$ is said to be \emph{calibrated} if \(\E_{x \sim \cD}[g(x) | h(x)] = h(x)\). Although the simulator provided by \cref{thm:ttv} is not necessarily perfectly calibrated, one can ensure that it is \emph{approximately calibrated} at the cost of a small increase in the complexity of $h$. Specifically, following \cite{gopalan2022low}, we say that $h$ is $\gamma$-calibrated if for all functions $w : [0, 1] \to [0, 1]$,
\[
    \Bigl\lvert\E_{x \sim \cD}[w(h(x))(g(x) - h(x)]\Bigr\rvert \le \gamma.
\]
The following result requires only slight modifications to the proof of \cref{thm:ttv}.

\begin{theorem}
\label{thm:ttv-calibrated}
    For all $\cD \in \Delta(\cX)$, $\cF \subseteq \{\cX \to [0, 1]\}$, $g : \cX \to [0, 1]$, and $0 < \gamma \le \eps \le 1$, there exists an $(\cF, \eps)$-regular and $\gamma$-calibrated simulator in $\cF_{(O(1/\eps^{2}), \, \tilde{O}(1/\gamma^3))}$.
\end{theorem}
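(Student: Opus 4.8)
\emph{Proof proposal.} The plan is to adapt the energy-increment (boosting) argument behind \cref{thm:ttv}, interleaving its regularity-repair step with a new calibration-repair step and governing both by a single potential
\[
    \Phi(h) \;:=\; \E_{x \sim \cD}\bigl[(g(x) - h(x))^2\bigr] \;\in\; [0, 1].
\]
Initialize $h \equiv \tfrac12$, so $\Phi(h) \le \tfrac14$, and repeatedly apply whichever repair is available until neither is. \emph{Regularity repair} is exactly as in \cref{thm:ttv}: if some $f \in \cF$ satisfies $\bigl|\E_\cD[f \cdot (g - h)]\bigr| > \eps$, replace $h$ by the pointwise truncation $\Pi_{[0,1]}(h \pm \eps f)$, the sign matching the correlation. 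Truncation is the projection onto a convex set containing $g$, so it only decreases $\Phi$, and the usual one-line computation gives $\Phi(h_{\mathrm{new}}) \le \Phi(h) - \eps^2$. This is the only move that consults $\cF$.

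\emph{Calibration repair} is the new ingredient: if $h$ is not $\gamma$-calibrated, then unpacking the definition (take $w$ supported where $v \mapsto \E_\cD[g - h \mid h = v]$ has the dominant sign) shows $\E_\cD\bigl|\E_\cD[g - h \mid h]\bigr| > \gamma$. Replace $h$ by a grid-rounded version of its \emph{recalibration} $\bar h(x) := \E_\cD[g \mid h(x)]$, i.e. the $L^2(\cD)$-projection of $g$ onto $\sigma(h)$. By Pythagoras, $\Phi(h) - \Phi(\bar h) = \Norm{\bar h - h}^2_{L^2(\cD)} \ge \bigl(\E_\cD|\bar h - h|\bigr)^2 > \gamma^2$; and since $\bar h$ is itself a conditional expectation, rounding it to a grid of resolution $\Theta(\gamma)$ perturbs $\Phi$ by only $O(\gamma^2)$ (the rounding error lies in $\sigma(h)$ and so is $L^2$-orthogonal to $g - \bar h$), leaving a net decrease of $\Omega(\gamma^2)$. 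After this move $h$ takes only $O(1/\gamma)$ distinct values and is $\gamma$-calibrated, so two calibration repairs never occur consecutively.

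\emph{Termination and accounting.} Each repair decreases $\Phi \in [0,\tfrac14]$ by $\Omega(\eps^2)$ or $\Omega(\gamma^2)$, so the procedure halts after $O(1/\gamma^2)$ steps, at which point $h$ is simultaneously $(\cF,\eps)$-regular and $\gamma$-calibrated. Only regularity repairs invoke $\cF$, and there are $O(1/\eps^2)$ of them, giving the first index $s_1 = O(1/\eps^2)$. The second index $s_2$ is the size of the circuit that, given the values of the selected $\cF$-functions, replays all $O(1/\eps^2 + 1/\gamma^2)$ steps: a regularity step is a bounded-precision arithmetic update of $O(\mathrm{polylog}(1/\gamma))$ gates on a value kept throughout in a grid of resolution $\Theta(\min(\eps^2,\gamma^2))$ (fine enough, exactly as in \cref{thm:ttv}, that truncation losses never swamp the $\Omega(\eps^2)$ per-step decrease), while a calibration step is a lookup of a hard-coded table carrying the current value of $h$ to its grid-rounded recalibration. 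A careful accounting of these tables — tracking how large the support of $h$ can be when a calibration repair fires, and how the $O(1/\gamma^2)$ possible such repairs interleave with the regularity moves — yields $s_2 = \tilde O(1/\gamma^3)$, following standard discretization bookkeeping (cf. \cite{gopalan2022low}).

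\emph{Main obstacle.} Everything except the final size bound is immediate from the single potential $\Phi$: convergence, the simultaneous guarantees, the count $O(1/\eps^2)$ of $\cF$-queries, and the absence of any ``ping-pong'' between the two repair types (each strictly decreases $\Phi$ by a fixed amount, so the process cannot oscillate). The delicate part is precisely the circuit-size bookkeeping for the calibration tables: the energy argument wants the working grid fine, while the circuit wants it coarse and each recalibration table small, and threading this needle so as to land at $\tilde O(1/\gamma^3)$ — rather than a larger power of $1/\gamma$ — is where the real work lies. Finally, a short separate check confirms that truncating the bit-length of $h$'s representation to $O(\log 1/\gamma)$ bits is harmless, since the accumulated arithmetic error over all steps is $o(\gamma^2)$.
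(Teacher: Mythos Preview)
The paper does not actually prove \cref{thm:ttv-calibrated}; it only remarks that the result ``requires only slight modifications to the proof of \cref{thm:ttv}.'' Your proposal spells out precisely the natural such modification --- augmenting the boosting loop with a recalibration move, and driving both by the same $L^2$ potential $\Phi$ --- and the core argument is correct: each repair strictly decreases $\Phi$; the Pythagoras step and the orthogonality of the rounding error to $g-\bar h$ are exactly right; alternation bounds the number of calibration moves by one plus the number of regularity moves; only regularity moves touch $\cF$, yielding $s_1 = O(1/\eps^2)$.

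Two small sharpenings. First, your alternation observation already caps the \emph{total} number of steps at $O(1/\eps^2)$, not merely $O(1/\gamma^2)$; using the tighter count helps the $s_2$ accounting. Second, your closing claim that the accumulated arithmetic error is $o(\gamma^2)$ requires the working precision to be $\gamma^{c}$ for a sufficiently large constant $c$ (i.e.\ $c\log(1/\gamma)$ bits with $c$ chosen so that $O(1/\eps^2)$ rounding errors of size $\gamma^c$ sum to $o(\gamma^2)$); as written, ``$O(\log 1/\gamma)$ bits'' is fine but the constant hidden in the $O(\cdot)$ is doing real work. You are right that landing exactly at $s_2=\tilde O(1/\gamma^3)$, rather than a larger polynomial in $1/\gamma$, is where the genuine bookkeeping lies --- the paper does not carry this out either, and your deferral to \cite{gopalan2022low} is in the same spirit.
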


We have chosen to work with a fairly strong definition of approximate calibration; various relaxations exist in the literature that would lead to an even milder dependence on $\gamma$ in the complexity of $h$ \cite{gopalan2022low}. More stringent than calibration is \emph{multicalibration} \cite{hkrr, kearns2018gerrymandering}, which is essentially a per-level-set regularity requirement on the simulator. We say that $h$ is $(\cF, \eps)$-multicalibrated if
\[
    \Pr\Bigl[\max_{f \in \cF} \Abs{\E\bigl[f(x)(g(x) - h(x)) \,\big|\,h(x)\bigr]} \le \eps\Bigr] \ge 1 - \eps.
\]
To emphasize the relationship between $(\cF, \eps)$-regularity and $(\cF, \eps)$-multicalibration, we will sometimes follow the now-standard convention of referring to the former as \emph{$(\cF, \eps)$-multiaccuracy}, where ``accuracy'' is short for accuracy in expectation.

One can check that $(\cF, \eps)$-multicalibration implies both $(\cF, O(\eps))$-multiaccuracy and, if $\cF$ contains the constant $0$ and $1$ functions, $O(\eps)$-calibration. Moreover, $(\cF, \eps)$-multicalibration can be achieved at the cost of a moderate increase in the complexity of $h$ relative to $\cF$:

\begin{theorem}[\cite{hkrr}]
\label{thm:multicalibration}
    For all $\cD \in \Delta(\cX)$, $\cF \subseteq \{\cX \to [0, 1]\}$, $g : \cX \to [0, 1]$, and $0 < \eps \le 1$, there exists an $(\cF, \eps)$-multicalibrated simulator in $\cF_{(O(1/\eps^{4}), \, \tilde{O}(1/\eps^4))}$.
\end{theorem}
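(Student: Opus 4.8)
The plan is to run the boosting-style construction of \cite{hkrr}: build the simulator iteratively, repeatedly ``patching'' a single level set on which some distinguisher exhibits a large conditional bias, and track the progress with a quadratic potential. Initialize $h \equiv \E_{x \sim \cD}[g(x)]$. At each step, read off the level sets of $h$ at grid resolution $\eps$: for each grid point $v$ let $S_v = \{x : \mathrm{round}_\eps(h(x)) = v\}$ and $\mu_v = \Pr_{x \sim \cD}[x \in S_v]$. If every level set with $\mu_v \ge \tau$ (for a threshold $\tau = \Theta(\eps^2)$ fixed below) satisfies $\max_{f \in \cF} \abs{\E_{x \sim \cD}[f(x)(g(x) - h(x)) \mid x \in S_v]} \le \eps$, halt and output $\mathrm{round}_\eps(h)$. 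Otherwise, select a violating pair $(v, f)$ with $\mu_v \ge \tau$, let $\sigma \in \{\pm 1\}$ be the sign of the conditional bias, update $h(x) \leftarrow \Pi_{[0,1]}(h(x) + \eps\sigma f(x))$ for $x \in S_v$ only (leaving $h$ unchanged off $S_v$; $\Pi_{[0,1]}$ denotes truncation to $[0,1]$), and continue.

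The analysis is a potential argument with $\Phi(h) = \E_{x \sim \cD}[(g(x) - h(x))^2] \in [0, 1]$. Truncation only decreases $\Phi$ since $g(x) \in [0,1]$, and, writing $r = g - h$ for the residual just before a patch on $S_v$ witnessed by $(v, f, \sigma)$, the additive change from $h \mapsto h + \eps\sigma f$ on $S_v$ is
\[
    -2\eps\,\E_{x \sim \cD}\bigl[\sigma f(x) r(x)\mathbf{1}_{S_v}(x)\bigr] + \eps^2\,\E_{x \sim \cD}\bigl[f(x)^2 \mathbf{1}_{S_v}(x)\bigr] < -2\eps \cdot \eps\mu_v + \eps^2 \mu_v = -\eps^2 \mu_v \le -\eps^2 \tau,
\]
using $\E_{x \sim \cD}[\sigma f r \mid S_v] > \eps$ from the violation and $f(x)^2 \le 1$. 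Thus each step decreases $\Phi$ by at least $\eps^2 \tau$, so the algorithm halts after $T \le 1/(\eps^2 \tau)$ steps. Choosing $\tau$ so that the total number of grid points times $\tau$ is at most $\eps$ — e.g. $\tau = \eps^2/2$, giving $(1/\eps + 1)\tau \le \eps$ — simultaneously forces $T = O(1/\eps^4)$ and guarantees that the level sets discarded as ``light'' carry total $\cD$-mass at most $\eps$. Since at termination every heavy level set has conditional bias at most $\eps$ against every $f \in \cF$, and replacing $h$ by $\mathrm{round}_\eps(h)$ shifts each conditional bias by at most $\eps$, the output is $(\cF, 2\eps)$-multicalibrated, hence (rescaling $\eps$ by a constant) $(\cF, \eps)$-multicalibrated.

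It remains to bound the output's complexity. Each patch is specified by one function $f_j \in \cF$, one sign, and the target level, so the output depends on at most $T = O(1/\eps^4)$ members of $\cF$, placing it in $\cF_{(O(1/\eps^4),\, \cdot)}$. To evaluate the output on input $x$ from the values $f_1(x), \ldots, f_T(x)$, a circuit maintains the running value $h^{(j)}(x)$ to $O(\log 1/\eps)$ bits of precision — re-rounding $h$ (and the incoming $f_j(x)$) to a fine $\mathrm{poly}(\eps)$-grid after each patch, which perturbs $\Phi$ by only $o(\eps^2\tau)$ per step and so leaves the iteration-count bound unaffected up to constants — and at step $j$ compares $\mathrm{round}_\eps(h^{(j-1)}(x))$ to the recorded target level, conditionally adding $\eps\sigma_j f_j(x)$. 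This is $O(\log 1/\eps)$-bit arithmetic repeated $T$ times, i.e. $\tilde{O}(1/\eps^4)$ gates, so the output lies in $\cF_{(O(1/\eps^4),\, \tilde{O}(1/\eps^4))}$ as claimed.

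I expect the main obstacle — and the reason the complexity here ($1/\eps^4$) is larger than for plain regularity (\cref{thm:ttv}, $1/\eps^2$) — to be the coupling between the weight-sensitive patches and the shifting level-set partition: after patching $S_v$, the rounded values of points in $S_v$ move, reshuffling the partition and potentially spawning fresh violations in other level sets, so there is no obvious monovariant on the combinatorial structure itself. The potential argument above sidesteps this entirely by charging all progress to the monotone quantity $\Phi$, but it pays a factor $1/\tau = \Theta(1/\eps^2)$ — the reciprocal of the smallest level-set weight we are willing to correct — on top of the usual $\Theta(1/\eps^2)$ boosting factor. One could instead try to invoke \cref{thm:ttv} as a per-level-set subroutine, but the same re-discretization issue blocks a clean black-box reduction, so the direct argument is preferable.
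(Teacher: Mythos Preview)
The paper does not prove this statement: \cref{thm:multicalibration} is simply quoted from \cite{hkrr}, and the surrounding text even says ``We emphasize that \cref{thm:multicalibration} will not be used directly in any of our proofs; it is nevertheless a useful point of reference.'' So there is no in-paper proof to compare your proposal against.

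That said, your proposal is a correct and faithful reconstruction of the standard \cite{hkrr} boosting argument: the level-set-localized patches, the squared-loss potential, the $\eps^2\tau$ per-step drop with $\tau = \Theta(\eps^2)$ to control the light-bucket mass, and the resulting $O(1/\eps^4)$ iteration count are exactly the ingredients of the original proof. Your accounting for discretization/rounding and the circuit-size bookkeeping are also sound. The only stylistic remark is that the closing paragraph about ``the main obstacle'' and the re-discretization issue, while accurate commentary, is not needed for the argument itself.
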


We emphasize that \cref{thm:multicalibration} will not be used directly in any of our proofs; it is nevertheless a useful point of reference. We also remark all of these concepts have been studied in the so-called \emph{multiclass} case, corresponding to $g : \cX \to \Delta([k])$. In this case, which we do not study in this paper, the complexity of known multicalibration constructions are dramatically worse, degrading exponentially with $k$, motivating the study of whether more lightweight regularity notions suffice for downstream applications.

\paragraph{Graph Regularity Iteration} Given a simple graph $G = (V, E)$, let $g = \bm{1}_E: V \times V \to \{0,1\}$ denote its edge indicator function, and let $\cF = \{\bm{1}_{S \times T} : S, T \subseteq V\}$ contain the indicator functions for all rectangles $S \times T \subseteq V \times V$, sometimes called \emph{cuts}. It turns out that famous and well-studied notions of graph regularity, such as Szemer\'{e}di regularity \cite{szemeredi1975regular} and Frieze-Kannan regularity \cite{frieze1996regularity, frieze1996approximation}, are closely related to multicalibration and multiaccuracy with respect to $\cF$, respectively \cite{trevisan2009regularity, skorski2017crypto, dwork2023pseudorandomness}. This connection suggests that techniques developed in the graph regularity literature may have counterparts in our complexity-theoretic setting.

One such technique is iteration, in which one constructs a sequence of regular partitions $\cP_1, \cP_2, \ldots$ of the same graph $G$. In this method, each $\cP_{i+1}$ is a refinement of $\cP_i$, and each $\cP_{i+1}$ is extremely regular relative to the complexity of $\cP_i$ (more formally, $\cP_{i+1}$ achieves regularity with an error parameter $\eps_{\abs{\cP_i}}$ that decays with the number of parts of $\cP_i$). With some care, one can ensure that there exist consecutive partitions $\cP = \cP_{i^\star}$ and $\cQ = \cP_{i^\star + 1}$, where $i^\star \le O(1/\eps^2)$, such that the $\cP$ and $\cQ$ are distance at most $\eps$ in an appropriately defined metric. Depending on the base regularity notion chosen to instantiate this technique, the resulting pair $(\cP, \cQ)$ may have substantially stronger regularity properties than $\cP$ alone, allowing one to derive, for example, Szemer\'{e}di regularity from the weaker notion of Frieze-Kannan regularity, as well as improved bounds for various combinatorial applications. For more detail about this technique in the graph setting, we refer the reader to \cite{zhao2023gtac}, as well as \cite{rodl2010regularity, fox2011new, conlon2012bounds}.

\section{Construction of Supersimulators}
\label{sec:construction}

In this section, we present two different supersimulator constructions in the abstract setting, both of which imply \cref{thm:main-succinct} when specialized to the setting of size-bounded Boolean circuits.

Our constructions are inspired by the iteration technique from the graph regularity literature discussed in the previous section, in which one takes a sequence of increasingly regular vertex partitions. In the graph context, ``increasingly regular'' means shrinking the error parameter $\eps \to 0$ but usually means keeping the distinguisher class (cuts) fixed. In the abstract setting, ``increasingly regular'' could be interpreted as either shrinking $\eps$ or expanding $\cF$, or both. Our first construction expands $\cF$ while keeping $\eps$ fixed. Our second expands $\cF$ and shrinks $\eps$ simultaneously. Although the second is more general than the first, the first has a simpler proof, so we present it separately.

\subsection{First Construction}
\label{sec:construction-1}

We first show how to construct a simulator $h$ that fools distinguishers whose complexity exceeds that of $h$ by an arbitrary, prespecified \emph{growth} function $G$. Recall that we measure the complexity of $h \in \cF_{(s_1, s_2)}$ by two numbers $s_1$ and $s_2$, which count the number of calls to functions in $\cF$ and the number of additional circuit gates, respectively. Accordingly, we consider a valid growth function to be any nondecreasing $G : \N^2 \to \N^2$ under the partial ordering of $\N^2$ in which $s' \ge s$ if both $s'_1 \ge s_1$ and $s'_2 \ge s_2$. (We say a function $F: \N^2 \to (0, 1)$ is \emph{nondecreasing} or \emph{nonincreasing} if $s \le s'$ implies $F(s) \le F(s')$ or $F(s) \ge F(s')$, respectively.)

\begin{theorem}[Supersimulators, Expanding]
\label{thm:supersimulator-basic}
    For all distributions $\cD \in \Delta(\cX)$, distinguisher families $\cF \subseteq \{\cX \to [0, 1]\}$, target functions $g : \cX \to [0, 1]$, error tolerances $\eps \in (0, 1/2)$, and nondecreasing $G : \N^2 \to \N^2$, there exists a size bound $s \in \N^2$ and a simulator $h \in \cF_s$ such that:
    \begin{itemize}
        \item \textbf{(regularity)} $h$ is $\bigl(\cF_{G(s)}, \eps\bigr)$-regular,
        \item \textbf{(complexity)} $s \le S_{\lfloor 1/3\eps^{2} \rfloor}$, where $S_0 = (1, 1)$ and $S_{i+1} = S_i + G(S_i) + \bigl(0, (\log(1/\eps))^{O(1)}\bigr)$.
    \end{itemize}
\end{theorem}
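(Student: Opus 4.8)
The plan is to imitate the graph-regularity iteration by running a single chain of simulators $h_0, h_1, h_2, \ldots$ in which each link corrects the previous one against the distinguisher class appropriate to its \emph{current} complexity, and to stop as soon as no correction is needed. Concretely: let $h_0 \equiv \tfrac12$, a constant function of complexity $s_0 \le S_0 = (1,1)$. Given $h_i \in \cF_{s_i}$, ask whether $h_i$ is $\bigl(\cF_{G(s_i)}, \eps\bigr)$-regular, the analogue of ``the partition $\cP_i$ needs no significant refinement.'' If so, halt and output $(h_i, s_i)$. If not, pick $f_i \in \cF_{G(s_i)}$ and a sign $\sigma_i \in \{-1,+1\}$ with $\sigma_i \E_{x \sim \cD}[f_i(x)(g(x) - h_i(x))] > \eps$, and—following the Trevisan--Tulsiani--Vadhan boosting step \cite{trevisan2009regularity}—let $h_{i+1}$ be the pointwise projection of $h_i + \eps\sigma_i f_i$ onto $[0,1]$, with each output value rounded to $p = \Theta(\log 1/\eps)$ bits of precision. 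Two things then need checking: that the complexity obeys the stated recursion, and that the chain halts within $\lfloor 1/3\eps^2 \rfloor$ steps.

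For the complexity, a circuit for $h_{i+1}$ can be assembled from a circuit for $h_i$, a circuit for $f_i$—which, belonging to $\cF_{G(s_i)}$, uses at most $G(s_i)_1$ oracle calls to $\cF$ and $G(s_i)_2$ further gates—and $(\log 1/\eps)^{O(1)}$ additional gates for the fixed-point scaling by $\eps$, the addition, the clipping, and the rounding. Hence $s_{i+1} \le s_i + G(s_i) + \bigl(0, (\log 1/\eps)^{O(1)}\bigr)$. Since $G$ is nondecreasing, the map $\psi(s) := s + G(s) + \bigl(0, (\log 1/\eps)^{O(1)}\bigr)$ is nondecreasing on $\N^2$; as $s_0 \le S_0$ and $S_{i+1} = \psi(S_i)$, an easy induction gives $s_i \le S_i$ for every $i$, and $\psi(S_i) \ge S_i$ shows the sequence $S_i$ is itself nondecreasing. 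So if the chain halts after $T$ steps, the output size is $s_T \le S_T \le S_{\lfloor 1/3\eps^2 \rfloor}$, provided $T \le \lfloor 1/3\eps^2 \rfloor$.

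For termination I would track the potential $\Phi_i := \E_{x \sim \cD}\bigl[(g(x) - h_i(x))^2\bigr]$, which lies in $[0,1]$ and, because $h_0 \equiv \tfrac12$ and $g$ is $[0,1]$-valued, satisfies $\Phi_0 \le \tfrac14$. At a non-terminal step, projecting onto $[0,1]^\cX$ only moves $h_i + \eps\sigma_i f_i$ closer to $g$ in $L^2(\cD)$, so the potential of the (unrounded) update is at most $\E[(g - h_i)^2] - 2\eps\,\sigma_i\E[f_i(g-h_i)] + \eps^2\E[f_i^2] \le \Phi_i - 2\eps^2 + \eps^2 = \Phi_i - \eps^2$, using $\norm{f_i}_\infty \le 1$; the subsequent rounding perturbs $h_{i+1}$ by at most $2^{-p} = \eps^{\Theta(1)}$ pointwise, hence changes the potential by $O(2^{-p})$, which is below $\eps^2/100$ once the constant in $p$ is chosen large enough. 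Thus $\Phi$ drops by at least $0.99\,\eps^2$ at each non-terminal step while staying nonnegative, so there can be fewer than $\tfrac14/(0.99\,\eps^2) < 1/(3\eps^2)$ of them; since this count is an integer, $T \le \lfloor 1/3\eps^2 \rfloor$. At the halt, $h = h_T \in \cF_{s_T}$ and, by the stopping condition with $s := s_T$, is $\bigl(\cF_{G(s)}, \eps\bigr)$-regular, which is exactly what is wanted.

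The main obstacle is conceptual rather than computational: the distinguisher class $\cF_{G(s)}$ one is asked to fool depends on the simulator's own complexity $s$, which grows with every correction, so fixing up the simulator against the current distinguishers threatens to expose it to still-stronger distinguishers, without end. What breaks this circularity is precisely the global potential argument, which charges every correction to a fixed $\Omega(\eps^2)$ decrease in $\Phi$ and hence caps the \emph{total} number of corrections at $\lfloor 1/3\eps^2 \rfloor$ no matter how quickly $G$ grows; the final size bound is then just $\psi$ iterated that many times, starting from $(1,1)$. A secondary but genuinely necessary technical point is the bit-precision bookkeeping: without the rounding step the description length of $h_i$ could inflate with each correction, so one rounds to $\Theta(\log 1/\eps)$ bits and checks—as above—that the accumulated rounding error over all $O(1/\eps^2)$ steps remains negligible compared with $\eps$. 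A final detail to handle cleanly is the base case, i.e. that the constant $\tfrac12$ indeed fits within $S_0 = (1,1)$ under the circuit-size convention in use.
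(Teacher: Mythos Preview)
Your proposal is correct and follows essentially the same boosting-style argument as the paper: start from $h_0 \equiv \tfrac12$, at each step select a violating distinguisher from the class indexed by the current complexity, take a clipped and rounded $\eps$-step, and use the $L^2$ potential $\Phi_i = \E(g-h_i)^2$ (starting from $\le 1/4$, dropping by $\Omega(\eps^2)$ per step) to bound the number of iterations by $\lfloor 1/3\eps^2\rfloor$. The only cosmetic difference is that you track the actual sizes $s_i$ and prove $s_i \le S_i$ by a monotonicity induction, whereas the paper directly selects $f_i \in \cF_{G(S_i)}$ and asserts $h_{i+1} \in \cF_{S_{i+1}}$; both routes yield the stated complexity bound.
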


The key difference between \cref{thm:supersimulator-basic} and \cref{thm:ttv} is that the distinguishers may now be more complex than the simulator. Although $h \in \cF_s$, it fools any distinguisher in the class $\cF_{G(s)}$, which can be much larger than $\cF_s$ for appropriately chosen growth functions $G(s) \gg s$. For example, in the Boolean circuit setting, letting $\cF$ consist of the $n$ coordinate functions $x \mapsto x_i$ for $x \in \{0, 1\}^n$ and defining $G(s_1, s_2) = (n, \max(n, s_2)^k)$ yields \cref{thm:main-succinct}. Note also that the complexity of $h$ relative to $\cF$ is still bounded above by a quantity independent of the target function $g$. Indeed, both $s_1$ and $s_2$ can be bounded above by a constant that depends only on the growth function $G$ and the error function $\eps$.

There are several important consequences of being capable of fooling distinguishers sufficiently larger than oneself. For example, one can check that if a simulator $h \in \cF_{s}$ is $(\cF_{s'}, \eps)$-regular, where $s_1' \ge s_1$ and $s_2' \ge s_2 + 1/\eps^{O(1)}$, then $h$, rounded to integer multiples of $\eps$, is automatically $O(\eps)$-calibrated. Roughly speaking, this is because a distinguisher in $\cF_{s'}$ has enough circuit gates to first compute the value $h(x)$ itself and then use its additional $1/\eps^{O(1)}$ gates to evaluate a weighted calibration test (see \cref{sec:preliminaries}). Similarly, if $s_1' \ge s_1 + 1$ and $s_2' \ge s_2 + 1/\eps^{O(1)}$, then the $\eps$-rounded version of $h$ must be $(\cF, \eps^{O(1)})$-multicalibrated. An analogous statement can be made for oracle-access OI, in which distinguishers are allowed to make $q$ oracle calls to $h$, if $s_1'$ and $s_2'$ are at least $q \cdot s_1$ and $q \cdot s_2$ respectively.

Despite these gains, the proof of \cref{thm:supersimulator-basic} is only marginally more cumbersome than that of \cref{thm:ttv}. The boosting-style structure of the proof is the same as that of \cref{thm:ttv}, but one now allows the family of candidate distinguishers to expand dramatically at each step. Interestingly, a modification like this was made in the graph regularity context to give an alternate proof of Szemer\'{e}di's regularity lemma---see Theorem 3.2 of \cite{lovasz2007analyst}, in which the distinguisher family consists of arbitrary unions of increasingly many rectangles. The proof is anlogous to the proof of Lemma 5.6 of \cite{dwork2021outcome} regarding code-access outcome indistinguishability.

\begin{proof}[Proof of \cref{thm:supersimulator-basic}]
    We define functions $h_i : \cX \to [0, 1]$ and $f_i : \cX \to [-1, +1]$ inductively, as follows. First, set $h_0(x) = 1/2$ for all $x \in \cX$. Next, for each $i$, let $\sigma_i \in \{-1, +1\}$ and $f_i \in \sigma_i \cdot \cF_{G(S_i)}$ maximize the correlation $\E_{x \sim \cD}\bigl[f_i(x)(g(x) - h_i(x))\bigr]$. Finally, define $h_{i+1}(x)$ as the projection of $h_i(x) + \eps f_i(x)$ onto the unit interval, rounded to the nearest integer multiple of $\eps' \le \eps^{10}$. Since $\eps'$-precision arithmetic and rounding can be done with $(\log (1/\eps))^{O(1)}$ operations, we have $h_{i + 1} \in \cF_{S_{i+1}}$ for the specified size bound $S_{i+1}$. Next, consider the potential function $\Phi(i) = \E(g(x) - h_i(x))^2$. Clearly, $\Phi(0) \le \frac{1}{4}$ and $0 \le \Phi(i) \le 1$ for all $i$. Furthermore, some algebra shows
    \begin{align*}
        \Phi(i + 1) &\le \E(g(x) - h_i(x) - \eps f_i(x))^2 + 2\eps'\\
        &\le \Phi(i)  - 2\eps \E[f_i(x)(g(x) - h_i(x))] + \eps^2 + 2\eps'.
    \end{align*}
    Phrased differently, since $\eps < \frac{1}{2}$, in each iteration for which $h_i$ is not $(\cF_{G(S_i)}, \eps)$-regular, the potential decreases by at least $\eps^2 - 2\eps' \ge \frac{3}{4}\eps^2$. This decrease can happen at most $\frac{1}{3\eps^2}$ times consecutively starting from $\Phi(0) \le \frac{1}{4}$, so we conclude that $h_i \in \cF_{S_i}$ is $(\cF_{G(S_i)}, \eps)$-regular for some $i \le \frac{1}{3\eps^2}$.
\end{proof}

\subsection{Second Construction}
\label{sec:construction-2}

Our second, more general supersimulator construction, which involves both a growth function and a decaying error function $\eps : \N^2 \to (0, 1)$, more closely resembles iterated graph regularity, both in its statement and in its proof, which is a straightforward iteration of \cref{thm:ttv-calibrated}.

\begin{theorem}[Supersimulators, Expanding/Shrinking]
\label{thm:supersimulator-shrinking}
    For all $\cD \in \Delta(\cX)$, $\cF \subseteq \{\cX \to [0, 1]\}$, $g : \cX \to [0, 1]$, nonincreasing $\eps : \N^2 \to (0, 1/2)$, $\alpha \in (0, 1/2)$, and nondecreasing $G : \N^2 \to \N^2$, there exist $s, s' \in \N^2$ and $h \in \cF_s$ and $h' \in \cF_{s'}$ such that:
    \begin{itemize}
        \item \textbf{(similarlity)} $\E_{x \sim \cD}(h(x) - h'(x))^2 \le \alpha + O(\eps(s))$,
        \item \textbf{(regularity)} $h'$ is $(\cF_{G(s)}, \eps(s))$-regular,
        \item \textbf{(complexity)} $s, s' \le S_{\lfloor 1/\alpha \rfloor}$, where $S_0 = (1, 1)$ and \[S_{i + 1} \le O\bigl(1/\eps(S_i)^{2}\bigr)G(S_i)  + \Bigl(0, \, \tilde{O}\bigl(1/\eps(S_i)^3\bigr)\Bigr).\]
    \end{itemize}
\end{theorem}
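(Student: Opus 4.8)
The plan is to iterate the calibrated regularity lemma (\cref{thm:ttv-calibrated}), in direct analogy with the graph-regularity iteration in which one builds a sequence of increasingly refined partitions. We may assume $G(s) \ge s$ for every $s$ (this is the regime of interest; otherwise replace $G$ by $s \mapsto (\max(G(s)_1, s_1), \max(G(s)_2, s_2))$, which is still nondecreasing and only changes constants in the recursion), so that $\cF_{S_i} \subseteq \cF_{G(S_i)}$ and the sequence $\{S_i\}$ given by the stated recursion is nondecreasing. Start with $h_0 \equiv 1/2 \in \cF_{S_0}$. Inductively, given $h_i \in \cF_{S_i}$, apply \cref{thm:ttv-calibrated} with distinguisher family $\cF_{G(S_i)}$ and with both the accuracy parameter and the calibration parameter equal to $\eps(S_i) \in (0, 1/2)$. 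This produces a simulator $h_{i+1}$ for $g$ under $\cD$ that is $(\cF_{G(S_i)}, \eps(S_i))$-regular and $\eps(S_i)$-calibrated and that lies in $(\cF_{G(S_i)})_{(O(1/\eps(S_i)^2),\, \tilde{O}(1/\eps(S_i)^3))}$. Unfolding the definition of $\cF_{G(S_i)}$ --- each oracle call to a member of $\cF_{G(S_i)}$ costs $\le G(S_i)_1$ calls to $\cF$ and $\le G(S_i)_2$ extra gates --- we get $h_{i+1} \in \cF_{S_{i+1}}$ with $S_{i+1} = O(1/\eps(S_i)^2)\, G(S_i) + (0, \tilde{O}(1/\eps(S_i)^3))$, matching the claimed recursion.

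Next I would analyze the potential $\Phi(i) = \E_{x\sim\cD}(g(x) - h_i(x))^2 \in [0,1]$, noting $\Phi(0) \le 1/4$ (as $h_0 \equiv 1/2$ and $g$ is $[0,1]$-valued). The identity $(g - a)^2 - (g - b)^2 = (b - a)\bigl(2(g - b) + (b - a)\bigr)$, applied with $a = h_i$ and $b = h_{i+1}$, yields
\[
    \Phi(i) - \Phi(i+1) = 2\,\E\bigl[(h_{i+1}(x) - h_i(x))(g(x) - h_{i+1}(x))\bigr] + \E(h_{i+1}(x) - h_i(x))^2 .
\]
I would bound the cross term by writing it as $\E[h_{i+1}(g - h_{i+1})] - \E[h_i(g - h_{i+1})]$: the first summand has absolute value at most $\eps(S_i)$ because $h_{i+1}$ is $\eps(S_i)$-calibrated (use the calibration inequality with test function $w = \mathrm{id}$); the second has absolute value at most $\eps(S_i)$ because $h_i \in \cF_{S_i} \subseteq \cF_{G(S_i)}$ and $h_{i+1}$ is $(\cF_{G(S_i)}, \eps(S_i))$-regular. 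Hence
\[
    \E(h_{i+1}(x) - h_i(x))^2 \le \Phi(i) - \Phi(i+1) + 4\eps(S_i) .
\]

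Finally I would extract a good index by telescoping. Among $i = 0, 1, \dots, \lfloor 1/\alpha\rfloor - 1$, if every gap $\Phi(i) - \Phi(i+1)$ exceeded $\alpha$, summing would force $\Phi(0) - \Phi(\lfloor 1/\alpha\rfloor) > \lfloor 1/\alpha\rfloor\,\alpha \ge 1 - \alpha > 1/2$ (using $\alpha < 1/2$), contradicting $\Phi(0) \le 1/4$ and $\Phi(\lfloor 1/\alpha\rfloor) \ge 0$. So some index $i^\star \le \lfloor 1/\alpha\rfloor - 1$ satisfies $\Phi(i^\star) - \Phi(i^\star+1) \le \alpha$. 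Output $h = h_{i^\star}$, $h' = h_{i^\star+1}$, $s = S_{i^\star}$, $s' = S_{i^\star+1}$; then $h \in \cF_s$ and $h' \in \cF_{s'}$, the similarity bound $\E(h(x) - h'(x))^2 \le \alpha + 4\eps(S_{i^\star}) = \alpha + O(\eps(s))$ follows from the last display, regularity is immediate since $h'$ was built to be $(\cF_{G(S_{i^\star})}, \eps(S_{i^\star}))$-regular which is exactly $(\cF_{G(s)}, \eps(s))$-regularity for $s = S_{i^\star}$, and $s, s' \le S_{\lfloor 1/\alpha\rfloor}$ because $i^\star + 1 \le \lfloor 1/\alpha\rfloor$ and $\{S_i\}$ is nondecreasing. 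The step I expect to be the main obstacle is the energy-increment bound on the cross term: the natural function witnessing a gap between $h_i$ and $h_{i+1}$ is $h_{i+1} - h_i$, which is far too complex to belong to $\cF_{G(S_i)}$, so the bound must be routed through calibration for the $h_{i+1}$ part (which is precisely why \cref{thm:ttv-calibrated} is needed in place of \cref{thm:ttv}) and through the cheapness of $h_i$ --- admissible as a distinguisher since its complexity is only $S_i$ --- for the other part.
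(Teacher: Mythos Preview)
Your proposal is correct and follows essentially the same approach as the paper: iterate \cref{thm:ttv-calibrated} with distinguisher families $\cF_{G(S_i)}$ and parameters $\eps(S_i)$, use the potential $\Phi(i)=\E(g-h_i)^2$, split the cross term $\E[(h_{i+1}-h_i)(g-h_{i+1})]$ into a calibration piece and a regularity piece (the latter requiring $h_i\in\cF_{G(S_i)}$), and pigeonhole to find a small potential drop. Your write-up is in fact slightly more careful than the paper's in making explicit the assumption $G(s)\ge s$ needed to ensure $h_i\in\cF_{S_i}\subseteq\cF_{G(S_i)}$ and in spelling out the pigeonhole step.
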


\cref{thm:supersimulator-shrinking} provides \emph{two} functions, $h$ and $h'$. It guarantees that that the latter is a simulator that fools distinguishers much more complex than the former. Specifically, while $h \in \cF_s$, the simulator $h'$ fools all distinguishers in $\cF_{G(s)}$, where $G$ is the arbitrary growth function. Moreover, it fools them extremely well, allowing only a vanishingly small distinguishing error of $\eps(s)$, which we may take to decay arbitrarily fast with $s$.

The fact that $h'$ fools distinguishers more complex than $h$ is only useful if we know that $h$ is nontrivial. This is achieved by the \emph{similarity} condition, which states that $h$ and $h'$ are similar to each other in $L^2$ norm. One consequence of this condition is that $h$ itself is a simulator that fools distinguishers larger than itself, albeit with an error parameter that is not vanishingly small:

\begin{corollary}
    Let $\cF, G, \alpha, \eps, h, s$ be as in \cref{thm:supersimulator-shrinking}. Then $h$ is $(\cF_{G(s)}, O(\alpha + \eps(s))^{\frac{1}{3}})$-regular.
\end{corollary}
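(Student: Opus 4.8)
The plan is to deduce the corollary directly from the two relevant conclusions of \cref{thm:supersimulator-shrinking}: the \emph{regularity} of $h'$ against $\cF_{G(s)}$ at error level $\eps(s)$, and the \emph{similarity} $\E_{x \sim \cD}(h(x) - h'(x))^2 \le \alpha + O(\eps(s))$. The idea is that since $h$ is $L^2$-close to $h'$, and $h'$ fools every $f \in \cF_{G(s)}$, then $h$ must fool every such $f$ too, up to an error governed by how close $h$ is to $h'$. The only subtlety is that $L^2$-closeness translates to a correlation bound via Cauchy--Schwarz, which costs a square root, hence the exponent $\frac{1}{3}$ in the statement (since $\alpha + O(\eps(s))$ itself appears under a square root, and we also absorb the $O(\cdot)$).

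Concretely, fix any $f \in \cF_{G(s)}$ with $f : \cX \to [0,1]$ (the same argument handles $f : \cX \to [-1,+1]$, as needed for the signed correlation). Write
\[
    \E_{x \sim \cD}[f(x)(g(x) - h(x))] = \E_{x \sim \cD}[f(x)(g(x) - h'(x))] + \E_{x \sim \cD}[f(x)(h'(x) - h(x))].
\]
The first term is at most $\eps(s)$ in absolute value by the regularity of $h'$. For the second term, by Cauchy--Schwarz and $\abs{f(x)} \le 1$,
\[
    \Bigl\lvert \E_{x \sim \cD}[f(x)(h'(x) - h(x))] \Bigr\rvert \le \Bigl(\E_{x \sim \cD}(h'(x) - h(x))^2\Bigr)^{1/2} \le \bigl(\alpha + O(\eps(s))\bigr)^{1/2}.
\]
Combining, $\lvert \E_{x \sim \cD}[f(x)(g(x) - h(x))] \rvert \le \eps(s) + (\alpha + O(\eps(s)))^{1/2}$. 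Since $\eps(s) \le \eps(s)^{1/3} \le (\alpha + O(\eps(s)))^{1/3}$ for $\eps(s) \in (0,1)$, and likewise $(\alpha + O(\eps(s)))^{1/2} \le O(\alpha + O(\eps(s)))^{1/3} = O(\alpha + \eps(s))^{1/3}$ after collapsing the nested big-$O$, the right-hand side is $O(\alpha + \eps(s))^{1/3}$, which is exactly the claimed regularity error for $h$.

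I do not expect a genuine obstacle here; this is a short Cauchy--Schwarz argument. The only thing that requires minor care is bookkeeping the constants and the nested $O(\cdot)$ so that the final exponent is uniform at $\frac{1}{3}$ rather than, say, $\frac{1}{2}$ — the cube root is chosen precisely so that both the additive $\eps(s)$ term from regularity and the square-root term from similarity are dominated by a single expression of the form $O(\alpha + \eps(s))^{1/3}$. One should also note that $\cF_{G(s)}$ is closed under negation and under the embedding of $[0,1]$-valued into $[-1,1]$-valued functions in the sense needed, so that the bound on the signed correlation $\lvert \E[f(x)(g(x)-h(x))]\rvert$ for all $f \in \cF_{G(s)}$ follows, matching the definition of $(\cF_{G(s)}, \cdot)$-regularity from \cref{sec:preliminaries}.
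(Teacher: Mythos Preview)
Your proof is correct, and in fact stronger than the paper's. The paper argues via a truncation/Markov step: it splits $\cX$ into a set $S = \{x : |h(x)-h'(x)| \le \beta^{1/3}\}$ (with $\beta = \alpha + O(\eps(s))$) and its complement, bounding the contribution on $S$ by $\beta^{1/3}$ pointwise and the contribution off $S$ by $\Pr[x \notin S] \le \E(h-h')^2/\beta^{2/3} = \beta^{1/3}$, which naturally produces the cube-root error. Your direct Cauchy--Schwarz bound $|\E[f(x)(h'(x)-h(x))]| \le (\E(h-h')^2)^{1/2} \le \beta^{1/2}$ is both simpler and sharper, yielding $(\cF_{G(s)}, O(\alpha+\eps(s))^{1/2})$-regularity, which you then correctly weaken to the stated $1/3$ using $\beta^{1/2} \le O(\beta^{1/3})$ for bounded $\beta$. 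A minor remark: your aside about $\cF_{G(s)}$ being closed under negation is unnecessary (and not quite right, since these are $[0,1]$-valued functions), but this does not matter because the absolute value in the definition of regularity already handles both signs.
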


\begin{proof}
    Fix any $f \in \pm \cF_{G(s)}$. Since $f$ takes values in $[-1, +1]$, for any subset $S \subseteq \cX$,
    \[
        \E\bigl[ f(x)(g(x) - h(x)) \bigr] \le \E\bigl[ f(x)(g(x) - h'(x)) \bigr] + \max_{x \in S}\, \abs{h(x) - h'(x)} + \Pr[x \notin S].
    \]
    It remains to bound the three terms on the right side. For the first term, apply the $(\cF_{G(s)}, \eps(s))$-regularity of $h'$. For the second term, set $S = \{x \in \cX : \abs{h(x) - h'(x)} \le \beta^{\frac{1}{3}}\}$, where $\beta = \alpha + O(\eps(s))$ is the right side of the similarity condition in \cref{thm:supersimulator-shrinking}. For the third term, apply Markov's inequality, which implies that \(\Pr[x \notin S] \le \E (h(x) - h'(x))^2 / \beta^{\frac{2}{3}} = \beta^\frac{1}{3}\).
\end{proof}

As already mentioned, the proof of \cref{thm:supersimulator-shrinking} involves a simple iteration of the calibrated version of the complexity-theoretic regularity lemma (\cref{thm:ttv-calibrated}). We use this calibrated version, rather than the default version (\cref{thm:ttv}), in order to establish the $L^2$ similarity condition. Without the calibration condition, we would only be able to upper bound the signed potential difference between $h$ and $h'$, rather than their $L^2$ distance.

\begin{proof}[Proof of \cref{thm:supersimulator-shrinking}]
    Choose any $h_0 : \cX \to [0, 1]$. For each $i \in \N$, let $h_{i+1}$ be the $(\cF_{G(S_i)}, \eps(S_i))$-regular and $\eps(S_i)$-calibrated predictor that \cref{thm:ttv-calibrated} guarantees lies in
    \[
        (\cF_{G(S_i)})_{\Bigl(O\bigl(1/\eps(S_i)^2\bigr), \, \tilde O\bigl(1/\eps(S_i)^3\bigr)\Bigr)}.
    \] Expanding each call to a function in $\cF_{G(S_i)}$ with calls to functions in $\cF$, we see that $h_{i +1} \in \cF_{S_{i+1}}$ for the specified size bound $S_{i+1}$. Next, define the potential function $\Phi(i) = \E(g(x) - h_i(x)^2)$ as in the proof of \cref{thm:supersimulator-basic}. Similar algebra shows that the $L^2$ distance between $h_i$ and $h_{i+1}$ relates to the difference between $\Phi(i)$ and $\Phi(i+1)$ via:
    \[
         \E(h_i(x) - h_{i+1}(x))^2 = \Phi(i) - \Phi(i+1)- 2\E(h_i(x) - h_{i+1}(x))(g(x) - h_{i+1}(x)).
    \]
    Recall that $(\cF_{G(s)}, \eps(s))$-regularity of $h_{i+1}$ implies \[\Bigl\lvert\E\bigl[h_i(x)(g(x) - h_{i+1}(x))\bigr]\Bigr\rvert \le \eps(s).\] Similarly, $\eps(s)$-calibration of $h_{i+1}$ implies \[\Bigl\lvert\E[h_{i+1}(x)(g(x) - h_{i+1}(x))]\Bigr\rvert \le \eps(s).\] To conclude, choose an $i \le 1/\alpha$ satisfying $\Phi(i) - \Phi(i + 1) \le \alpha$, and let $(h, h') = (h_{i}, h_{i + 1})$.
\end{proof}

\section{Computational Indistinguishability of Product Distributions}
\label{sec:mpv}

In this section, we present our two results on product distributions. In \cref{sec:mpv-cma}, we present our result using calibrated multiaccuracy. In \cref{sec:mpv-super}, we present our result using supersimulators.

\subsection{Characterization via Calibrated Multiaccuracy}
\label{sec:mpv-cma}

In this section, we prove \cref{thm:mpv-cma-succinct}, which quantitatively improves the main result of \cite{marcussen2024characterizing}. Essentially, the improvement comes from replacing multicalibration with calibrated multiaccuracy. The first part of \cref{thm:mpv-cma-succinct}, regarding two proxy distributions $\tcD_0$ and $\tcD_1$, follows from \cref{thm:mpv-cma-1} below---simply let $\cF$ contain all size-$s$ circuits, let $\gamma = \eps^{10}$, and let $h$ be the calibrated simulator provided by \cref{thm:ttv-calibrated}. Making this same substitution in \cref{thm:mpv-cma-2} yields the second part of \cref{thm:mpv-cma-succinct}, regarding a single proxy distribution $\tcD_1$ with $\tcD_0 = \cD_0$.

Curiously, \cref{thm:mpv-cma-1} is a tighter and simpler analysis of essentially the same construction as in \cite{marcussen2024characterizing}, suggesting that their analysis may have only superficially required the stronger assumption of multicalibration. However, for the single-proxy version in \cref{thm:mpv-cma-2}, we must introduce a new construction that was not present in \cite{marcussen2024characterizing}. Indeed, the corresponding construction and analysis in \cite{marcussen2024characterizing} genuinely requires the full strength of multicalibration.

To state our results, recall that $\cD_0, \cD_1 \in \Delta(\cX)$ are $(\cF, \eps)$-indistinguishable if for all $f \in \cF$,
\[
    \Abs{\E_{x \sim \cD_0}[f(x)] - \E_{x \sim \cD_1}[f(x)]} \le \eps.
\]
Given two distributions $\cD_0$ and $\cD_1$, the following theorem finds an $(\cF, \eps)$-indistinguishable proxy for each, which we call $\tcD_0$ and $\tcD_1$, respectively, such that the information-theoretic distinguishability between ($k$-fold products of) the proxy pair roughly matches 
the computational distinguishability between ($k$-fold products of) the original pair.

To build intuition for the construction of $\tcD_0$ and $\tcD_1$, consider the problem of predicting the value of an unknown, uniformly random bit $y \in \{0, 1\}$ given only an observation $x \in \cX$ which is sampled from $\cD_0$ if $y = 0$ or from $\cD_1$ if $y = 1$. An equivalent way to generate the pair $(x, y)$ is to first sample $x$ from the balanced mixture $\frac{1}{2}(\cD_0 + \cD_1)$ and then sample $y$ from its conditional distribution given $x$, namely $\cB(g(x))$ for $g(x) = \frac{\cD_1(x)}{\cD_0(x)+ \cD_1(x)}$. Keeping the marginal distribution of $x$ fixed and replacing the function $g$ with some efficient simulator $h$ yields a pair $(x, \tilde{y})$ with a possibly different distribution. Nevertheless, we can again view the generation of $(x, \tilde{y})$ in two equivalent ways, depending on which of $x$ or $\tilde{y}$ we sample first. Returning to the view in which the bit $\tilde{y} \in \{0, 1\}$ is sampled first, we \emph{define} the proxies $\tcD_0$ and $\tcD_1$ to be the conditional distribution of $x$ given $\tilde{y} = 0$ and $\tilde{y} = 1$, respectively. Ultimately, the proof of \cref{thm:mpv-cma-1} will show that this construction has the desired properties as long as $h$ is calibrated and multiaccurate:

\begin{theorem}
\label{thm:mpv-cma-1}
    Given $\cF \subseteq \{\cX \to [0, 1]\}$, $\cD_0, \cD_1 \in \Delta(\cX)$, $\eps \in (0, 1)$, $\gamma \in (0, \frac{1}{10})$, and $h : \cX \to [0, 1]$, consider the distribution $\cD$ over pairs $(x, y) \in \cX \times \{0, 1\}$ that we define in two equivalent ways:
    \begin{itemize}
        \item[(1a)] sample $y \sim \cB(1/2)$ and then $x|y \sim \cD_y$,
        \item[(1b)] sample $x \sim \cD_\cX$ and then $y | x \sim \cB(g(x))$, for $\cD_\cX =  \frac{1}{2}(\cD_0 + \cD_1)$ and $g = \frac{\cD_1}{\cD_0 + \cD_1}$.
    \end{itemize}
    Consider the following proxy distribution $\tcD$ over pairs $(x, \tilde{y}) \in \cX \times \{0, 1\}$:
    \begin{itemize}
        \item[(2a)] sample $x \sim \cD_\cX$ and then $\tilde{y} | x \sim \cB(h(x))$.
    \end{itemize}
    Then, there exists a unique $p \in [0, 1]$ and $\tcD_0, \tcD_1 \in \Delta(\cX)$ such that the following is equivalent:
    \begin{itemize}
        \item[(2b)] sample $\tilde{y} \sim \cB(p)$ and then $x|y \sim \tcD_y$. 
    \end{itemize}
    Moreover, if $h$ is an $(\cF, \eps)$-regular and $\gamma$-calibrated simulator for $g$ under $\cD_\cX$, then:
    \begin{enumerate}[(i)]
        \item $\cD_b$ and $\tcD_b$ are $(\cF, 2\eps + 5\gamma)$-indistinguishable for each bit $b \in \{0, 1\}$,
        \item $\cD_0^{\otimes k}$ and $\cD_1^{\otimes k}$ are distinguished with advantage $\TV(\tcD_0^{\otimes k}, \tcD_1^{\otimes k}) - 14k\gamma $ by
        \[
            h'(z_1, \ldots, z_k) = \bm{1}{\left[\prod_{i=1}^k h(z_i) > \prod_{i=1}^k (1-h(z_i))\right]}.
        \]
    \end{enumerate}
\end{theorem}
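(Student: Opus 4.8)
My plan is to fix the proxy decomposition by Bayes' rule, dispatch (i) by a short computation, and prove (ii) by combining a \emph{calibration substitution} with a \emph{threshold comparison}. The decomposition (2b) is forced: $p$ must equal $\Pr_{(x,\tilde y)\sim\tcD}[\tilde y=1]=\E_{x\sim\cD_\cX}[h(x)]$, and when $p\in(0,1)$ the proxies must be the reweightings of $\cD_\cX$ with densities $x\mapsto h(x)/p$ and $x\mapsto(1-h(x))/(1-p)$ (the cases $p\in\{0,1\}$, in which one proxy is undetermined, are degenerate and do not arise under the hypotheses of (i)--(ii)). I record the four likelihood ratios relative to $\cD_\cX$: $d\cD_1/d\cD_\cX=2g$, $d\cD_0/d\cD_\cX=2(1-g)$, $d\tcD_1/d\cD_\cX=h/p$, $d\tcD_0/d\cD_\cX=(1-h)/(1-p)$. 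Applying $\gamma$-calibration with the weight $w\equiv1$ gives $\lvert p-\tfrac12\rvert=\lvert\E_{\cD_\cX}[h-g]\rvert\le\gamma$ (using $\E_{\cD_\cX}[g]=\tfrac12$), so $p\in(\tfrac25,\tfrac35)$; this is the only fact about $p$ I will need. For (i), fix $f\in\cF$ and expand via the likelihood ratios:
\[
    \E_{\cD_1}[f]-\E_{\tcD_1}[f]=2\,\E_{\cD_\cX}[(g-h)f]+\bigl(2-\tfrac1p\bigr)\E_{\cD_\cX}[hf].
\]
The first term is at most $2\eps$ by $(\cF,\eps)$-regularity, the second at most $\lvert2-\tfrac1p\rvert\le\lvert2p-1\rvert/p\le5\gamma$, and the case $b=0$ is identical after replacing $g,h,p$ by their complements; this gives $(\cF,2\eps+5\gamma)$-indistinguishability. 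So (i) uses regularity for the ``$g$ versus $h$'' discrepancy and calibration only for the normalization $p\ne\tfrac12$.

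The substantive part is (ii). Put $\bar g(v):=\E_{x\sim\cD_\cX}[g(x)\mid h(x)=v]$. The \emph{calibration substitution} starts from the fact that $h'$ depends on the values $h(z_1),\dots,h(z_k)$ \emph{alone}, so conditioning each $z_i$ on $h(z_i)$ lets me replace each $g(z_i)$ by $\bar g(h(z_i))$ exactly: $\Pr_{\cD_1^{\otimes k}}[h'=1]=\E_{z\sim\cD_\cX^{\otimes k}}\bigl[h'(z)\prod_i2\bar g(h(z_i))\bigr]$, and likewise for $\cD_0^{\otimes k}$. Next, $\gamma$-calibration with the weights $w(v)=\bm{1}[\bar g(v)>v]$ and $w(v)=\bm{1}[\bar g(v)<v]$ gives $\E_{\cD_\cX}\lvert\bar g(h(x))-h(x)\rvert\le2\gamma$, and a telescoping hybrid argument --- swapping the $k$ per-coordinate factors $2\bar g(h(z_i))\leftrightarrow h(z_i)/p$, resp.\ $2(1-\bar g(h(z_i)))\leftrightarrow(1-h(z_i))/(1-p)$, one at a time --- shows this changes $\Pr_{\cD_b^{\otimes k}}[h'=1]$ by at most $k\bigl(2\,\E_{\cD_\cX}\lvert\bar g(h)-h\rvert+\lvert2p-1\rvert\bigr)=O(k\gamma)$. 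The key point is that each of these per-coordinate factors has $\cD_\cX$-expectation exactly $1$, so once $h'$ is bounded by $1$ in absolute value the hybrid sum factorizes and collapses term by term, with no $2^k$-type blowup. Hence the ``real'' advantage $\Pr_{\cD_1^{\otimes k}}[h'=1]-\Pr_{\cD_0^{\otimes k}}[h'=1]$ lies within $O(k\gamma)$ of its proxy counterpart $\Pr_{\tcD_1^{\otimes k}}[h'=1]-\Pr_{\tcD_0^{\otimes k}}[h'=1]$.

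It remains to relate the proxy advantage of $h'$ to $\TV(\tcD_0^{\otimes k},\tcD_1^{\otimes k})$, which equals the advantage of the likelihood-ratio test. That test thresholds $\prod_ih(z_i)\big/\prod_i(1-h(z_i))$ at $(p/(1-p))^k$ rather than at $1$, so it disagrees with $h'$ only on the event that this ratio lies between $1$ and $(p/(1-p))^k$, and there the two proxy product distributions are within a multiplicative factor $(p/(1-p))^{\pm k}=1\pm O(k\gamma)$ of each other (once more using $\lvert p-\tfrac12\rvert\le\gamma$); so the proxy advantage of $h'$ is within $O(k\gamma)$ of $\TV(\tcD_0^{\otimes k},\tcD_1^{\otimes k})$. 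Chaining this with the previous step and collecting constants via $p\in(\tfrac25,\tfrac35)$ yields $\Pr_{\cD_1^{\otimes k}}[h'=1]-\Pr_{\cD_0^{\otimes k}}[h'=1]\ge\TV(\tcD_0^{\otimes k},\tcD_1^{\otimes k})-14k\gamma$.

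The step I expect to be the main obstacle --- really the heart of (ii) --- is the calibration substitution. It is there that one uses the specific shape of $h'$: because $h'$ reads only $h(z_1),\dots,h(z_k)$, the substitution $g(z_i)\to\bar g(h(z_i))$ is an equality, so the \emph{only} error is the calibration defect of $h$; and one must run the hybrid argument so that the mismatched normalizations ($2^k$ for $\cD_b^{\otimes k}$ versus $1/p^k$, $1/(1-p)^k$ for $\tcD_b^{\otimes k}$) do not blow up, which works precisely because every per-coordinate likelihood-ratio factor integrates to $1$. This is also why \emph{calibrated} multiaccuracy is exactly the right hypothesis: bare multiaccuracy gives no control on $\E_{\cD_\cX}\lvert\bar g(h)-h\rvert$, while multicalibration --- a per-level-set regularity condition --- is strictly stronger than needed and, as the introduction notes, is costlier to achieve.
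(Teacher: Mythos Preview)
Your argument for part (i) is essentially the same as the paper's. For part (ii), however, you take a genuinely different route. The paper introduces the \emph{unnormalized} proxies $\hcD_1=2h\,\cD_\cX$ and $\hcD_0=2(1-h)\,\cD_\cX$, for which $h'$ is \emph{exactly} the likelihood-ratio test; it then bounds $\TV(\hcD_b^{\otimes k},\tcD_b^{\otimes k})$ via $|p-\tfrac12|\le\gamma$ (contributing $10k\gamma$), and runs a hybrid from $\hcD_b^{\otimes k}$ to $\cD_b^{\otimes k}$ by observing that, with the other coordinates fixed, $h'$ is a threshold in $h(z_j)$ and hence a calibration test (contributing $4k\gamma$). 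You instead condition on the values $h(z_i)$ to replace $g$ by $\bar g(h)$ exactly, then hybridize directly from $\cD_b^{\otimes k}$ to $\tcD_b^{\otimes k}$, and finally absorb the threshold mismatch $1$ vs.\ $(p/(1-p))^k$ in a separate step. Your key observation---that each per-coordinate factor is the Radon--Nikodym derivative of a genuine probability measure, so the conditional expectation over the remaining coordinates stays in $[0,1]$---is precisely what makes your hybrid avoid any $2^k$ blowup, and is arguably cleaner than the paper's hybrid, which integrates against the non-probability measures $\hcD_b^{\otimes(k-j-1)}$. On the other hand, the paper's choice of $\hcD_b$ eliminates the need for your separate threshold-comparison step altogether.

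One caveat: your stated constant $14$ does not obviously fall out of your bounds. The hybrid costs $k\bigl(2\,\E_{\cD_\cX}|\bar g(h)-h|+|2p-1|\bigr)\le 6k\gamma$ per bit $b$, hence $12k\gamma$ for the advantage, and the threshold-comparison step contributes roughly another $4k\gamma$ to $5k\gamma$ (from $1-((1-p)/p)^k$ or $((1-p)/p)^k-1$). This lands nearer $16k\gamma$ than $14k\gamma$. The method is sound, but if you want to match the paper's constant you will need to track the arithmetic more carefully or reorganize the two error contributions.
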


\begin{proof}
    Informally, the idea behind the proof is to establish that
    \[
        \cD_1 = 2g\cD_\cX, \quad \cD_0 = 2(1-g)\cD_\cX, \quad \tcD_1 \approx 2h\cD_\cX, \quad \tcD_0 \approx 2(1-h)\cD_\cX,
    \]
    from which parts (i) and (ii) both follow. First, observe that $\cD_1 = 2g\cD_\cX$ and $\cD_0 = 2(1 - g)\cD_\cX$ are direct consequences of the definitions of $\cD_\cX$ and $g$. To compute similar formulas for the proxy distributions $\tcD_0$ and $\tcD_1$, set $\hcD_0 = 2(1-h)\cD_\cX$ and $\hcD_1 = 2h\cD_\cX$ and observe that for any $z \in \cX$,
    \[
        \tcD_b(z) = \Pr[x = z \,|\,\tilde{y} = b] = \frac{\Pr[\tilde{y} = b \,|\, x = z] \Pr[x = z]}{\Pr[\tilde{y} = b]}.
    \]
    The numerator is $h(z)\cD_\cX(z)$ if $b = 1$ and $(1 - h(z))\cD_\cX(z)$ if $b = 0$. By $\gamma$-calibration,
    \[
        \Pr[\tilde{y} = 1] = \E[h(x)] = \E[g(x)] \pm \gamma = \frac{1}{2} \pm \gamma.
    \]
    It follows that \(1/\Pr[\tilde{y} = b] = 2 \pm 5\gamma\) for all $\gamma < 1/10$, so \(\TV(\tcD_b, \hcD_b) \le 5\gamma\). Next, for $f \in \cF$,
    \[
        \Abs{\E_{x \sim \cD_1}[f(x)] - \E_{x \sim \hcD_1}[f(x)]} = \Abs{\E_{x \sim \cD_\cX}[2f(x)(g(x) - h(x))]} \le 2\eps.
    \]
    Thus, $\cD_1$ and $\hcD_1$ are $(\cF, 2\eps)$-indistinguishable, as are $\cD_0$ and $\hcD_0$. Similarly, by $\gamma$-calibration, $\cD_b$ and $\hcD_b$ are $(\{w \circ h\}, 2\gamma)$-indistinguishable for any $w : [0, 1] \to [0, 1]$. 

    At this point, we have established the statistical closeness of $\tcD_b$ and $\hcD_b$, as well as the computational indistinguishability of $\hcD_b$ and $\cD_b$. When combined, these two properties immediately imply part (i) of the theorem. To prove part (ii), first observe that the function $h'$ defined in the theorem statement optimally distinguishes $\hcD_0^{\otimes k}$ from $\hcD_1^{\otimes k}$ by definition, so
    \[
        \Abs{\E_{x \sim \hcD_0^{\otimes k}}[h'(x)] - \E_{x \sim \hcD_1^{\otimes k}}[h'(x)]} = \TV(\hcD_0^{\otimes k}, \hcD_1^{\otimes k})\ge \TV(\tcD_0^{\otimes k}, \tcD_1^{\otimes k}) - 10k\gamma.
    \]
    To conclude the proof, it remains to replace both expectations over $\hcD_b^{\otimes k}$ on the left with expectations over $\cD_b^{\otimes k}$. We do this with a hybrid argument. In detail, for each $b \in \{0, 1\}$ and $j \in \{0, 1\ldots, k\}$, consider the hybrid distributions $\cD^{(j)}_b = \cD_b^{\otimes j} \otimes \hcD_b^{\otimes (k - j)}$. To relate the expectations of $h'$ under $\cD_b^{(j)}$ and $\cD_b^{(j+1)}$, observe that for any $z_1 \in \cX^{j}$ and $z_3 \in \cX^{k-j - 1}$, there exists a threshold $\tau_{z_1, z_3} \in [0, 1]$ and corresponding threshold indicator function $w_{z_1, z_3}(p) = \bm{1}[p > \tau_{z_1, z_2}]$ such that
    \(
        h'(z_1, z_2, z_3) = w_{z_1, z_3}(h(z_2))
    \)
    for all $z_2 \in \cX$. Thus, the $(\{w_{z_1, z_3} \circ h\}, 2\gamma)$-indistinguishability of $\cD_b$ and $\hcD_b$ implies
    \[
        \Abs{\E_{x \sim \cD_b^{(j)}}[h'(x)] - \E_{x \sim \cD_b^{(j+1)}}[h'(x)]}
        \le \E_{\substack{x_{1} \sim \cD_b^{\otimes j} \\ x_{3} \sim \hcD_b^{\otimes (k-j-1)}}}\Abs{\Pr_{x_2 \sim \cD_b}[h(x_2) > \tau_{x_1, x_3}] - \Pr_{x_2 \sim \hcD_b}[h(x_2) > \tau_{x_1, x_3}]} \le 2\gamma.
    \]
    To conclude the proof, note that $\cD_b^{(0)} = \hcD_b^{\otimes k}$ and $\cD_b^{(k)} = \cD_b^{\otimes k}$ and apply the triangle inequality.
\end{proof}

We now turn our attention to the single-proxy version of our result (i.e. the second part of \cref{thm:mpv-cma-succinct}), in which we enforce the constraint that $\tcD_0 = \cD_0$. Note that $\tcD_1$ still may differ from $\cD_1$. The construction of $\tcD_1$ in this setting will follow the same pattern as in the preceding proof: we first consider a pair $(x, y)$ in which $x|y \sim \cD_y$, replace the conditional distribution of $y|x$ with a calibrated and multiaccurate simulator to obtain a pair $(x, \tilde{y})$, and finally define $\tcD_1$ to be the conditional distribution of $x|(\tilde{y}=1)$. The key difference will be in the choice of the marginal distribution of $y$. In the preceding proof, we took $y \sim \cB(1/2)$. In the following proof, we will take $y \sim \cB(\eps)$ for a small parameter $\eps > 0$. Roughly speaking, this change ensures that the marginal distribution of $x$ is significantly tilted toward $\cD_0$, eliminating the need for an explicit proxy distribution $\tcD_0$. Because of the many similarities between the preceding theorem statement and the one below, we have written the key differences in blue:

\begin{theorem}
\label{thm:mpv-cma-2}
    Given $\cF$, $\cD_0$, $\cD_1$, $\eps$, and $h$ as in \cref{thm:mpv-cma-1}, let $\gamma \in (0, {\color{navy} \frac{\eps}{2}})$ and suppose we:
    \begin{itemize}
        \item[(1a)] sample $y \sim {\color{navy} \cB(\eps)}$ and then $x|y \sim \cD_y$; or, equivalently,
        \item[(1b)] sample $x \sim \cD_\cX$ and then $y|x \sim \cB(g(x))$, for $\cD_\cX = {\color{navy} (1-\eps)\cD_0 + \eps\cD_1}$ and $g = {\color{navy} \frac{\eps \cD_1}{(1-\eps)\cD_0 + \eps\cD_1}}$.
    \end{itemize}
    Define $\tcD_1$ as in \cref{thm:mpv-cma-1}. If $h$ is $(\cF, {\color{navy} \eps^2})$-regular and $\gamma$-calibrated for $g$ under $\cD_\cX$, then:
    \begin{enumerate}[(i)]
        \item $\cD_1$ and $\tcD_1$ are $(\cF, {\color{navy} \eps + \frac{2\gamma}{\eps^2}})$-indistinguishable,
        \item $\cD_0^{\otimes k}$ from $\cD_1^{\otimes k}$ are distinguished with advantage at least $\TV(\cD_0^{\otimes k}, \tcD_1^{\otimes k}) - {\color{navy} \bigl(\frac{2\gamma}{\eps^2} + \frac{\gamma}{\eps} + \eps\bigr)k}$ by
        \[
            h'(z_1, \ldots, z_k) = \bm{1}{\left[\prod_{i=1}^k h(z_i) > {\color{navy} \eps^k}\right]}.
        \]
    \end{enumerate}
\end{theorem}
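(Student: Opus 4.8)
The plan is to mirror the proof of \cref{thm:mpv-cma-1} with the prior $\cB(1/2)$ replaced by the tilted prior $\cB(\eps)$, retracing how the tilt propagates into the error terms. From the definitions of $\cD_\cX$ and $g$ one reads off the exact identities $\cD_1 = \tfrac1\eps\, g\,\cD_\cX$, $\cD_0 = \tfrac1{1-\eps}(1-g)\,\cD_\cX$, and $\cD_\cX = (1-\eps)\cD_0 + \eps\cD_1$, while unwinding the construction of $\tcD_1$ gives $\tcD_1 = \tfrac1p\, h\,\cD_\cX$ for $p := \Pr[\tilde y = 1] = \E_{\cD_\cX}[h]$. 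Applying $\gamma$-calibration to the constant weight function shows $\abs{p - \E_{\cD_\cX}[g]} = \abs{p - \eps} \le \gamma$, so $p = \eps \pm \gamma$ and, since $\gamma < \eps/2$, also $\abs{\tfrac1p - \tfrac1\eps} \le \tfrac{2\gamma}{\eps^2}$. I would then work with the unnormalized measure $\hcD_1 := \tfrac1\eps\, h\,\cD_\cX$, the analogue of the proxy $\hcD_1 = 2h\cD_\cX$ used before: it satisfies $\norm{\hcD_1 - \tcD_1}_1 = \abs{\tfrac1\eps - \tfrac1p}\cdot\norm{h\cD_\cX}_1 = \abs{\eps - p}/\eps \le \gamma/\eps$ and carries total mass $p/\eps$, a quantity near $1$ that must be tracked through products.

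Two estimates follow directly from the hypotheses on $h$. First, $\cD_1$ and $\hcD_1$ are $(\cF,\eps)$-indistinguishable, since for $f\in\cF$ we have $\abs{\E_{\cD_1}[f]-\E_{\hcD_1}[f]} = \tfrac1\eps\abs{\E_{\cD_\cX}[f(g-h)]} \le \tfrac1\eps\,\eps^2 = \eps$ by $(\cF,\eps^2)$-regularity; and running the same computation with $f$ replaced by any $w\circ h$, $w : [0,1]\to[0,1]$, gives the bound $\tfrac1\eps\gamma$ by $\gamma$-calibration, so $\cD_1$ and $\hcD_1$ are indistinguishable to advantage $\gamma/\eps$ by every such $w\circ h$. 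Part (i) is then immediate from $\cD_1$ being $(\cF,\eps)$-indistinguishable from $\hcD_1$ together with $\norm{\hcD_1 - \tcD_1}_1 \le \gamma/\eps$ and the inequality $\eps + \gamma/\eps \le \eps + \tfrac{2\gamma}{\eps^2}$.

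For part (ii), the crucial observation --- exactly as in \cref{thm:mpv-cma-1} --- is that $h'$ is the likelihood-ratio test between $\cD_\cX^{\otimes k}$ and $\hcD_1^{\otimes k}$: since $\hcD_1^{\otimes k}(z)/\cD_\cX^{\otimes k}(z) = \prod_i h(z_i)/\eps$, we have $h'(z) = \bm{1}[\hcD_1^{\otimes k}(z) > \cD_\cX^{\otimes k}(z)]$, so $\Delta := \E_{\hcD_1^{\otimes k}}[h'] - \E_{\cD_\cX^{\otimes k}}[h'] = \norm{(\hcD_1^{\otimes k}-\cD_\cX^{\otimes k})_+}_1$, which upper-bounds the distinguishing advantage of any $[0,1]$-valued test on that pair. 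Expanding $\cD_\cX^{\otimes k} = (1-\eps)^k\cD_0^{\otimes k} + R$ with $R \ge 0$ and $\norm{R}_1 = 1-(1-\eps)^k \le k\eps$, and bounding $\norm{\hcD_1^{\otimes k} - \tcD_1^{\otimes k}}_1$ by telescoping (once the mass factors $(p/\eps)^j$, $j\le k$, are controlled), one obtains a bound of the form $\Delta \ge \TV(\cD_0^{\otimes k}, \tcD_1^{\otimes k}) - \text{(error)}$. It remains to transfer $\Delta$ to the advantage of $h'$ on $\cD_0^{\otimes k}$ versus $\cD_1^{\otimes k}$: replace $\cD_\cX^{\otimes k}$ by $\cD_0^{\otimes k}$ at cost $\TV(\cD_\cX^{\otimes k},\cD_0^{\otimes k}) \le k\,\TV(\cD_\cX,\cD_0) = k\eps\,\TV(\cD_1,\cD_0) \le k\eps$, and replace $\hcD_1^{\otimes k}$ by $\cD_1^{\otimes k}$ by a coordinatewise hybrid argument, using the indistinguishability of $\cD_1$ and $\hcD_1$ by the functions $w\circ h$ and the fact --- identical to \cref{thm:mpv-cma-1} --- that fixing the other $k-1$ coordinates turns $h'$ into a threshold indicator $\bm{1}[h(\cdot) > \tau]$ of $h$, which is such a function. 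Collecting the three resulting error types --- an $\eps k$ from swapping $\cD_\cX$-products for $\cD_0$-products, a $\gamma k/\eps$ from the calibration-based hybrid, and a $\gamma k/\eps^2$ from the $\tfrac1p$-versus-$\tfrac1\eps$ discrepancy --- and using that one may assume the final slack is at most $1$ (otherwise the bound is vacuous), which forces $\gamma \le \eps^2/2k$ and keeps every $(p/\eps)^j$, $j\le k$, within $1 + O(\eps)$ of $1$, yields the claimed slack $\bigl(\tfrac{2\gamma}{\eps^2} + \tfrac{\gamma}{\eps} + \eps\bigr)k$.

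The step I expect to be the main obstacle is precisely this error bookkeeping in the second half of part (ii): one must perform two substitutions in tandem --- a total-variation swap of $\cD_\cX^{\otimes k}$ for $\cD_0^{\otimes k}$ and a calibration-based hybrid swap of $\hcD_1^{\otimes k}$ for $\cD_1^{\otimes k}$ --- while handling the fact that $\hcD_1$ is unnormalized, so that its $k$-fold product carries total mass $(p/\eps)^k$ that threads through every estimate and must be absorbed into the error budget. The conceptual content is the same as in \cref{thm:mpv-cma-1}: because $h'$ is, coordinate by coordinate, a threshold of $h$, plain calibration rather than multicalibration suffices for the hybrid, and the genuinely new work lies in the arithmetic forced by tilting the prior toward $\cD_0$.
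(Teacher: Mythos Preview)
Your proposal is correct and follows essentially the same route as the paper: define $\hcD_1 = h\cD_\cX/\eps$, use calibration to bound $\lVert\hcD_1-\tcD_1\rVert_1$ and to get $(\{w\circ h\},\gamma/\eps)$-indistinguishability of $\cD_1$ and $\hcD_1$, recognize $h'$ as the likelihood-ratio test for $\cD_\cX^{\otimes k}$ versus $\hcD_1^{\otimes k}$, then swap $\cD_\cX^{\otimes k}\to\cD_0^{\otimes k}$ by total variation and $\hcD_1^{\otimes k}\to\cD_1^{\otimes k}$ by the threshold-based hybrid. If anything you are more careful than the paper about the mass factors $(p/\eps)^j$ coming from $\hcD_1$ being unnormalized; the paper treats these casually, implicitly relying on the same ``otherwise the bound is vacuous'' observation that you make explicit.
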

\begin{proof}
    Informally, the key observation is that $\cD_\cX$ is so tilted toward $\cD_0$ that we have no need for the proxy $\tcD_0$. More formally, we first observe that $\cD_1 = g\cD_\cX / \eps$. To compute a similar formula for the proxy distribution $\tcD_1$, we set $\hcD_1 = h\cD_\cX/\eps$ and observe that for any $z \in \cX$,
    \[
        \tcD_1(z) = \Pr[x = z | \tilde{y} = 1] = \frac{\Pr[\tilde{y} = 1 \,|\, x = z] \Pr[x = z]}{\Pr[\tilde{y} = 1]}.
    \]
    The numerator is $h(z)\cD_\cX(z)$. By $\gamma$-calibration,
    \[
        \Pr[\tilde{y} = 1] = \E[h(x)] = \E[g(x)] \pm \gamma = \eps \pm \gamma.
    \]
    It follows that $1/ \Pr[\tilde{y} = 1] = \frac{1}{\eps} \pm \frac{2\gamma}{\eps^2}$ for all $\gamma < \frac{\eps}{2}$, so $\TV(\tcD_1, \hcD_1) \le \frac{2\gamma}{\eps^2}$. Next, for $f \in \cF$,
    \[
        \Abs{\E_{x \sim \cD_1}[f(x)] - \E_{x \sim \hcD_1}[f(x)]} = \Abs{\E_{x \sim \cD_\cX}\Bigl[\frac{1}{\eps}f(x)(g(x) - h(x))\Bigr]} \le \eps.
    \]
    Thus, $\cD_1$ and $\hcD_1$ are $(\cF, \eps)$-indistinguishable. Similarly, by $\gamma$-calibration, $\cD_1$ and $\hcD_1$ are $(\{w \circ h\}, \frac{\gamma}{\eps})$-indistinguishable for any $w : [0, 1] \to [0, 1]$.
    
    At this point, we have shown that $\tcD_1$ and $\hcD_1$ are statistically close and that $\hcD_1$ and $\cD_1$ are computationally indistinguishable. Together, these two properties immediately imply part (i). To prove part (ii), observe that by definition, the function $h'$ defined in the theorem statement optimally distinguishes $\cD_\cX^{\otimes k}$ from $\hcD_1^{\otimes k}$, so
    \[
        \Abs{\E_{x \sim \cD_\cX^{\otimes k}}[h'(x)] - \E_{x \sim \hcD_1^{\otimes k}}[h'(x)]} = \TV(\cD_\cX^{\otimes k}, \hcD_1^{\otimes k})\ge \TV(\cD_\cX^{\otimes k}, \tcD_1^{\otimes k}) - \frac{2\gamma}{\eps^2}k.
    \]
    To conclude the proof, it remains to replace $\cD_\cX^{\otimes k}$ on both the left and right sides with $\cD_0^{\otimes k}$, and also to replace the expectation over $\hcD_1^{\otimes k}$ on the left side with $\cD_1^{\otimes k}$. Since $\TV(\cD_\cX^{\otimes k}, \cD_0^{\otimes k}) \le k \eps$, the first of these changes can be made at the cost of an additive $2k\eps$ penalty. By the same hybrid argument as in the proof of \cref{thm:mpv-cma-1}, the second incurs an additional $\frac{\gamma}{\eps}k$.
\end{proof}

\subsection{Characterization via Supersimulators}
\label{sec:mpv-super}

In this section, we prove \cref{thm:mpv-super-succinct}, which fully closes the complexity gap between the upper and lower bounds by combining the supersimulator of \cref{thm:supersimulator-basic} with Theorems \ref{thm:mpv-cma-1} and \ref{thm:mpv-cma-2}.

\begin{proof}[Proof of \cref{thm:mpv-super-succinct}]
    First, observe that in \cref{thm:mpv-cma-1}, whenever $h$ has Boolean circuit complexity at most $s'$, the corresponding $h'$ has complexity at most $G(s') = ks' + (k/\eps)^{O(1)}$. Thus, given any $s \in \N$, it suffices to apply \cref{thm:mpv-cma-1} to an $\eps^{10}$-calibrated simulator $h$ of complexity $s'$ that fools distinguishers of complexity $G(\max(s, s'))$ with error $\eps$. Applying \cref{thm:supersimulator-basic} with an appropriately chosen growth function shows that there exists such a simulator $h$ with complexity $s' \le k^{1/\eps^2}(s + (k/\eps)^{O(1)})$. Finally, observe that starting from \cref{thm:mpv-cma-2} instead of \cref{thm:mpv-cma-1} doubles the leading exponent on $\eps$ from $2$ to $4$ due to its $\eps^2$-regularity requirement.
\end{proof}

\bibliographystyle{alpha}
\bibliography{sources}

\end{document}